\newcommand{\lenc}{\mathrm{lenc}}
\newcommand{\gra}{\mathrm{GRA}}
\newcommand{\subf}{\mathrm{Subf}}
\newcommand{\E}{\langle \mathrm{E} \rangle}
\newcommand{\A}{\langle \mathrm{A} \rangle}
\newcommand{\PML}{\mathrm{PML}}
\title{Complexity of Polyadic Boolean Modal Logics: \ \\ Model Checking and Satisfiability} 
\titlerunning{Complexity of Polyadic Boolean Modal Logics} 
\author{Reijo Jaakkola}{Tampere University, Finland \and \url{https://reijojaakkola.github.io/} }{reijo.jaakkola@tuni.fi}{https://orcid.org/0000-0003-4714-4637}{}
\authorrunning{R. Jaakkola} 
\keywords{Polyadic modal logics, Boolean modal logics, Model checking, Satisfiability} 
\begin{document}

\maketitle

\begin{abstract}
We study the computational complexity of model checking and satisfiability problems of polyadic modal logics extended with permutations and Boolean operators on accessibility relations. First, we show that the combined complexity of the model checking problem for the resulting logic is \textsc{PTime}-complete. Secondly, we show that the satisfiability problem of polyadic modal logic extended with negation on accessibility relations is \textsc{ExpTime}-complete. Finally, we show that the satisfiability problem of polyadic modal logic with permutations and Boolean operators on accessibility relations is \textsc{ExpTime}-complete, under the necessary assumption that the number of accessibility relations that can be used is bounded by a constant. 
\end{abstract}

\section{Introduction}\label{sec:introduction}

In recent years there has been increasing interest in generalizing complexity results for logics that only have access to relation symbols of arity at most two, to logics that have access to relations of arbitrary high arity, which we will call their \emph{polyadic} extensions, see for example \cite{Bednarczyk2021ExploitingFS,OneDimensionalFragment,IsoTuisku2018Ordered,IsoTuisku2021DescriptionLA,ORDEREDFRAGMENTS,Kieronski2014ComplexityAE,10.1093/logcom/exac002,10.1145/3209108.3209168}. In \cite{OneDimensionalFragment} the authors introduced the uniform one-dimensional logic $\mathrm{U}_1$, which is a polyadic extension of the two-variable logic $\mathrm{FO}^2$. It was proved in \cite{Kieronski2014ComplexityAE} that the complexity of $\mathrm{U}_1$ satisfiability problem is \textsc{NExpTime}-complete, which is the same as for $\mathrm{FO}^2$ \cite{GKV97}. In the very recent work \cite{Bednarczyk2021ExploitingFS} the forward guarded fragment FGF was introduced, which is a polyadic extension of the description logic $\mathcal{ALC}$ with global diamond. In the same work it was established that its satisfiability and conjunctive query entailment problems have the same complexity as the corresponding problems in the case of $\mathcal{ALC}$, i.e., both are \textsc{ExpTime}-complete. In the two aforementioned examples the complexities of the base logic and its polyadic extension coincided, but there are also examples where the polyadic extension has a higher complexity. For instance, in \cite{Kieronski19} it is proved that the satisfiability problem of guarded $\mathrm{U}_1$, which is a polyadic extension of guarded $\mathrm{FO}^2$, is \textsc{NExpTime}-complete, while the satisfiability problem of guarded $\mathrm{FO}^2$ is \textsc{ExpTime}-complete \cite{Gradel99}.

While lifting complexity results from base logics to their polyadic extensions is, at least to the author, already intrinsically interesting, it also has more applied motivations stemming from, say, database theory, since polyadic relations occur naturally in various contexts and having access to them can be advantageous. For instance, a simple relation such as ``Alice received a message M from Bob'' is best viewed as a ternary relation. One application area where the potential need for polyadic logics has been recognized is the very active field of \emph{description logics} \cite{DescriptionLogicBook}. Indeed, given that several description logics used in knowledge representation can only access binary roles, it should not come as a surprise that several polyadic \emph{description logics} have been already suggested in the literature \cite{CalvaneseGL98,LutzST99,Schmolze89}. Finding polyadic extensions of modal logics can be seen as attempts at finding natural polyadic description logics.

To make the research around polyadic extensions more systematic, in \cite{IsoTuisku2021DescriptionLA} the authors outlined a research program for systematically extending complexity results for description logics --- and modal logics more generally --- into their polyadic counterparts, following the ideas presented in \cite{KuusistoUfSurvey}. The main idea is the following: since description logics can be seen as standard modal logic extended with \emph{relation operators} -- such as inverse roles and counting -- and the standard modal logic has a canonical extension into a polyadic modal logic (namely the extension in which diamonds can bind multiple formulas), one can easily obtain quite canonical extensions of known description logics\footnote{Of course, some conceptual work is needed to figure out what are the canonical polyadic extensions of the relevant role operators.}. To demonstrate their research program in action, the polyadic extension of $\mathcal{ALCQI}$, i.e., the extension of $\mathcal{ALC}$ with inverse roles and counting, was studied in \cite{IsoTuisku2021DescriptionLA}. The authors proved that the concept satisfiability problem for this logic is \textsc{Pspace}-complete, which is the same as in the case of $\mathcal{ALCQI}$.

The main purpose of the present work is to contribute to the above research program in the context of Boolean modal logics, which are modal logics extended with Boolean operators on accessibility relations \cite{Lutz2000TheCO}. In these logics one can for example write down formulas such as $\langle \neg R \rangle p$, which expresses that one can reach a world in which $p$ is true via the complement of the accessibility relation $R$. While being very natural modal logics as such, they are also closely related to other interesting logics such as modal logics extended with the window operator \cite{windowmodality,Lutz2000TheCO}, the two-variable logic \cite{Lutz2001ModalLA} and the uniform one-dimensional fragment \cite{KuusistoUfSurvey}. They have also been considered in the context of description logics \cite{LutzS00}.

Boolean modal logics present novel and intriguing technical challenges, especially when studying the complexity of their satisfiability problems. Indeed, the most common explanation for the good algorithmic properties of modal logics is that they often enjoy some variant of the tree-model property \cite{Marx2007,WhyAreModalLogicsDecidable,Vardi1996WhyIM}, which Boolean modal logics lack. This is a consequence of the fact that Boolean modal logics have access to \emph{complements} of accessibility relations. To clarify this important point, we point out that in Boolean modal logics one can write statements such as $[R] [ \neg R ] \bot \land [\neg R] [\neg R ] \bot$, which enforce that $R$ needs to be interpreted in the Kripke models of this sentence as the total binary relation. Clearly such sentences do not have a tree-model property.

Satisfiability problems of Boolean modal logics were first studied in \cite{Lutz2000TheCO}, where it was proved that modal logic with negations of accessibility relations $\mathrm{ML}(\neg)$ has \textsc{ExpTime}-complete satisfiability problem, while the complexity of full Boolean modal logic is \textsc{NExpTime}-complete. As a follow-up to this work, in \cite{Lutz2001ModalLA} authors studied the complexity of the satisfiability problem for Boolean modal logic extended with inverses of accessibility relations and equality $\mathrm{ML}(I,s,\neg,\cap)$, and proved that if the number of accessibility relations is bounded, then its satisfiability problem is \textsc{ExpTime}-complete. This latter result should be contrasted with the fact that the satisfiability problem of $\mathrm{FO}^2$ is \textsc{NExpTime}-hard already over unary vocabularies, since $\mathrm{FO}^2$ has the same expressive power as $\mathrm{ML}(I,s,\neg,\cap)$ \cite{Lutz2001ModalLA}.

In this work we partially extend these results to the polyadic case. First, we will prove that the satisfiability problem of polyadic $\mathrm{ML}(\neg)$, which we denote by $\PML(\neg)$, has \textsc{ExpTime}-complete satisfiability problem. Secondly, we will partially generalize the complexity result of \cite{Lutz2001ModalLA} by proving that polyadic Boolean modal logic extended with arbitrary permutations of accessibility relations (as opposed to just inverses of binary accessibility relations), which we denote by $\PML(p,s,\neg,\cap)$, has \textsc{ExpTime}-complete satisfiability problem, if the number of accessibility relations in the underlying vocabulary is finite. We note that in general the satisfiability problem of $\PML(p,s,\neg,\cap)$ is \textsc{NExpTime}-complete, since it contains $\mathrm{ML}(\neg,\cap)$ and it is a fragment of $\mathrm{U}_1$ \cite{KuusistoUfSurvey}.

An important technical contribution of the present work is that we prove these two results in a \emph{unified manner}. Originally, the complexity of $\mathrm{ML}(\neg)$ was established using automata theory \cite{Lutz2000TheCO}, while the complexity of $\mathrm{ML}(I,s,\neg,\cap)$ over finite vocabularies was established via a poly-time reduction to the satisfiability problem of modal logic with \emph{difference operator} over a restricted class of Kripke frames. In contrast, we establish upper bounds on the satisfiability problem of $\PML(\neg)$ and $\PML(p,s,\neg,\cap)$ by simply using elementary (albeit in the second case quite technical) reductions to the satisfiability problem of polyadic $\mathrm{ML}$ extended with global diamond $\E$, which we denote by $\PML + \E$. These reductions were inspired by the reduction used in \cite{Lutz2001ModalLA} to establish the complexity of $\mathrm{ML}(I,s,\neg,\cap)$ over finite vocabularies. The overall structure of these reductions is quite robust and hence we expect that similar reductions will yield in the future several extensions of the results presented in this paper.

In addition to studying satisfiability problems, we also study the combined complexity of the model checking problem of $\PML(p,s,\neg,\cap)$, which is the variant where both the model and the formula itself are received as part of the input, and prove that it is \textsc{PTime}-complete, the non-trivial part being the upper bound. It is well-known that model checking problems of various modal logics with tree-model property -- even quite expressive ones such as the guarded fragment -- are \textsc{PTime}-complete \cite{ModelCheckingGF,GradelSurvey}. However, besides these modal logics and finite variable fragments of first-order logic \cite{GradelSurvey}, it seems that there are not that many natural logics known for which the complexity of the model checking problem lies in \textsc{PTime}. Note that the formulas of $\PML(p,s,\neg,\cap)$ are neither guarded nor are they expressible in any finite variable fragment of first-order logic. Thus, even though proving that the model checking problem of $\PML(p,s,\neg,\cap)$ is in \textsc{PTime} is not technically too complicated, we still believe that the result is interesting since the logic $\PML(p,s,\neg,\cap)$ is quite distinct from other logics known in the literature with \textsc{PTime}-complete combined complexity.

The structure of this paper is as follows. First, in Section \ref{sec:preliminaries} we will formally define the logics that are studied in this paper. Then, in Sections \ref{sec:negation_sat} and \ref{sec:bounded_sat} we prove that the satisfiability problems of $\PML(\neg)$ and $\PML(p,s,\neg,\cap)$ respectively are \textsc{ExpTime}-complete, the latter under the assumption that the number of accessibility relations is finite, which, as pointed out earlier, is necessary if $\textsc{ExpTime} \neq \textsc{NExpTime}$. Finally, in Section \ref{sec:model_checking} we prove that the combined complexity of $\PML(p,s,\neg,\cap)$ is \textsc{PTime}-complete.

\section{Preliminaries}\label{sec:preliminaries}

In this paper we will only consider relation symbols of arity at least two. Given a relation symbol $R$, we will use $ar(R)$ to denote its arity. If $\tau$ is a set of relation symbols and $\Phi$ is a set of propositional symbols, then a \emph{Kripke-model over} $(\tau,\Phi)$ is a tuple $\mathfrak{M} = (W,(R^\mathfrak{M})_{R\in \tau},V)$, where
\begin{enumerate}
    \item for every $R\in \tau$, $R^\mathfrak{M} \subseteq W^{ar(R)}$; and
    \item $V:\Phi \to \mathcal{P}(W)$.
\end{enumerate}
Members of $\tau$ are also called \emph{accessibility relations}. In what follows we will never specify explicitly what the underlying set of propositional symbols is and we will only occasionally specify the set of accessibility relations.

In this paper we consider extensions of standard polyadic (multimodal) modal logic via \emph{relation operators}, which are essentially isomorphic invariant mappings that map relational structures to relational structures. Since we will focus our attention only on a specific set of relation operators, we will omit the formal definition of a relation operator here, which can be found in \cite{Jaakkola2020AlgebraicCF}.

The main relation operators that we are going to consider are $p,s,\neg,\cap$, where $p$ and $s$ are called \emph{cyclic permutation} and \emph{swap permutation} respectively. Furthermore, for technical reasons we are going to need two additional relation operators $\backslash,\cup$. Let $\tau$ denote a set of relation symbols. Given $k\geq 2$, the set of $k$\emph{-ary terms} $\gra_k(p,s,\neg,\cap,\backslash,\cup)[\tau]$ is generated by the following grammar
\[\mathcal{R} ::= R \mid p\mathcal{R} \mid s\mathcal{R} \mid \neg \mathcal{R} \mid \mathcal{R} \cap \mathcal{R} \mid \mathcal{R} \backslash \mathcal{R} \mid \mathcal{R} \cup \mathcal{R},\]
where $R\in\tau$ is a $k$-ary relation. We use $\gra(p,s,\neg,\cap,\backslash,\cup)[\tau]$ to denote 
\[\bigcup_{k\geq 2} \gra_k(p,s,\neg,\cap,\backslash,\cup)[\tau],\]
which is the set of all terms. Arity of a term $\mathcal{R}$ is denoted by $ar(\mathcal{R})$.

Let $\mathfrak{M}$ be a Kripke model over $\tau$ and let $\mathcal{R} \in \gra(p,s,\neg,\cap,\backslash,\cup)$ be a $k$-ary term. We define the \emph{interpretation} of $\mathcal{R}$ over $\mathfrak{M}$ recursively as follows.
\begin{enumerate}
    \item If $\mathcal{R} = R \in \tau$, then we define $\llbracket R \rrbracket_\mathfrak{M} = R^\mathfrak{M}$.
    \item If $\mathcal{R} = p\mathcal{R}'$, then we define
    \[\llbracket \mathcal{R} \rrbracket_\mathfrak{M} = \{(a_k,a_1,\dots,a_{k-1}) \in W^k \mid (a_1,\dots,a_k) \in \llbracket \mathcal{R}' \rrbracket_\mathfrak{M}\}\]
    \item If $\mathcal{R} = s\mathcal{R}'$, then we define
    \[\llbracket \mathcal{R} \rrbracket_\mathfrak{M} = \{(a_1,\dots,a_{k-2},a_k,a_{k-1}) \in W^k \mid (a_1,\dots,a_k)\in \llbracket \mathcal{R}' \rrbracket_\mathfrak{M}\}\]
    \item If $\mathcal{R} = \neg \mathcal{R}'$, then we define $\llbracket \mathcal{R} \rrbracket_\mathfrak{M} = W^k \backslash \llbracket \mathcal{R}' \rrbracket_\mathfrak{M}$
    \item If $\mathcal{R} = \mathcal{R}' \cap \mathcal{R}''$, then we define $\llbracket \mathcal{R} \rrbracket_\mathfrak{M} = \llbracket \mathcal{R}' \rrbracket_\mathfrak{M} \cap \llbracket \mathcal{R}'' \rrbracket_\mathfrak{M}$
    \item If $\mathcal{R} = \mathcal{R}' \backslash \mathcal{R}''$, then we define $\llbracket \mathcal{R} \rrbracket_\mathfrak{M} = \llbracket \mathcal{R}' \rrbracket_\mathfrak{M} \backslash \llbracket \mathcal{R}'' \rrbracket_\mathfrak{M}$
    \item If $\mathcal{R} = \mathcal{R}' \cup \mathcal{R}''$, then we define $\llbracket \mathcal{R} \rrbracket_\mathfrak{M} = \llbracket \mathcal{R}' \rrbracket_\mathfrak{M} \cup \llbracket \mathcal{R}'' \rrbracket_\mathfrak{M}$
\end{enumerate}

Given $k\geq 2$, we let $S_k$ denote the set of all bijections $\{1,\dots,k\} \to \{1,\dots,k\}$. It is a well-known group theoretic fact that every permutation $\sigma \in S_k$ can be obtained by composing cyclic permutation with swap permutation. Given this, we will adopt the notational convention that we will use $\sigma \in S_k$ to denote some fixed (possibly empty) sequence consisting of operators $p$ and $s$ that generates it. The only explicit requirement that we impose on this sequence is that it should not be possible to rewrite it into a smaller one. Thus, for example, we use the identity permutation to denote the empty sequence.

The following lemma collects some elementary algebraic identities for terms.

\begin{lemma}\label{lemma:simple_equations}
    Let $\mathcal{R}_1,\mathcal{R}_2 \in \mathrm{GRA}(p,s,\neg,\cap)[\tau]$ be $k$-ary terms and let $\sigma \in S_k$. Let $\mathfrak{M}$ be a Kripke model over $\tau$.
    \begin{enumerate}
        \item $\llbracket \neg \neg \mathcal{R}_1 \rrbracket_\mathfrak{M} = \llbracket \mathcal{R}_1 \rrbracket_\mathfrak{M}$
        \item $\llbracket \sigma \neg \mathcal{R}_1 \rrbracket_\mathfrak{M} = \llbracket \neg \sigma \mathcal{R}_1 \rrbracket_\mathfrak{M}$
        \item $\llbracket \neg (\mathcal{R}_1 \cap \mathcal{R}_2) \rrbracket_\mathfrak{M} = \llbracket (\neg \mathcal{R}_1 \cup \neg \mathcal{R}_2) \rrbracket_\mathfrak{M}$
        \item $\llbracket \neg (\mathcal{R}_1 \cup \mathcal{R}_2) \rrbracket_\mathfrak{M} = \llbracket (\neg \mathcal{R}_1 \cap \neg \mathcal{R}_2) \rrbracket_\mathfrak{M}$
        \item $\llbracket (\mathcal{R}_1 \cap \neg \mathcal{R}_2) \rrbracket_\mathfrak{M} = \llbracket (\mathcal{R}_1 \backslash \mathcal{R}_2) \rrbracket_\mathfrak{M}$
        \item $\llbracket (\mathcal{R}_1 \cup \neg \mathcal{R}_2) \rrbracket_\mathfrak{M} = \llbracket \neg (\mathcal{R}_2 \backslash \mathcal{R}_1) \rrbracket_\mathfrak{M}$
    \end{enumerate}
\end{lemma}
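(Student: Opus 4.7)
The plan is to prove each of the six identities by unfolding the definitions of the interpretation function and applying elementary set-theoretic identities. All six are essentially one-line manipulations once the right observations are made.

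For item (1), I would note that $\llbracket \mathcal{R}_1 \rrbracket_\mathfrak{M} \subseteq W^k$ by definition, so $\llbracket \neg \neg \mathcal{R}_1 \rrbracket_\mathfrak{M} = W^k \setminus (W^k \setminus \llbracket \mathcal{R}_1 \rrbracket_\mathfrak{M}) = \llbracket \mathcal{R}_1 \rrbracket_\mathfrak{M}$. For items (3) and (4), I would apply the classical De Morgan laws to the definitions of $\llbracket \mathcal{R}_1 \cap \mathcal{R}_2 \rrbracket_\mathfrak{M}$ and $\llbracket \mathcal{R}_1 \cup \mathcal{R}_2 \rrbracket_\mathfrak{M}$. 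For item (5), I would unfold the definitions and observe that $X \cap (W^k \setminus Y) = X \setminus Y$ for any $X,Y \subseteq W^k$. Item (6) follows by combining items (1), (3) and (5): $\llbracket \mathcal{R}_1 \cup \neg \mathcal{R}_2 \rrbracket_\mathfrak{M}$ equals the complement of $\llbracket \neg \mathcal{R}_1 \cap \mathcal{R}_2 \rrbracket_\mathfrak{M}$, which by (5) equals the complement of $\llbracket \mathcal{R}_2 \setminus \mathcal{R}_1 \rrbracket_\mathfrak{M}$.

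The only item requiring more than direct unfolding is (2). Here my strategy is to first establish the atomic cases: that $p$ commutes with $\neg$ and that $s$ commutes with $\neg$. This works because both $p$ and $s$ act on $k$-tuples as bijections of $W^k$, and any bijection commutes with set-complement. Concretely, for the $p$-case, a tuple $(a_k, a_1, \dots, a_{k-1})$ lies in $\llbracket p \neg \mathcal{R}_1 \rrbracket_\mathfrak{M}$ iff $(a_1, \dots, a_k) \notin \llbracket \mathcal{R}_1 \rrbracket_\mathfrak{M}$ iff $(a_k, a_1, \dots, a_{k-1}) \notin \llbracket p \mathcal{R}_1 \rrbracket_\mathfrak{M}$, and the argument for $s$ is analogous. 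Then, using the convention from the preceding paragraph in the excerpt (that $\sigma \in S_k$ is denoted by some sequence of $p$'s and $s$'s), I would induct on the length of that sequence to push $\neg$ all the way past $\sigma$.

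Overall, there is no real obstacle here; the only step requiring any thought is verifying that $p$ and $s$ individually commute with $\neg$, and that step is immediate from the fact that both act as permutations of $W^k$. The rest is routine bookkeeping.
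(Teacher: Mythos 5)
Your proof is correct. The paper states this lemma without proof, treating all six identities as elementary, and your argument --- unfolding the definitions, applying De Morgan and the identity $X \cap (W^k \setminus Y) = X \setminus Y$, and for item (2) observing that $p$ and $s$ act as bijections of $W^k$ (hence commute with complement) before inducting on the length of the generating sequence for $\sigma$ --- is exactly the routine verification the author intends the reader to supply.
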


Let $\mathcal{R} \in \mathrm{GRA}(p,s,\neg)[\tau]$ be a $k$-ary term. We say that $\mathcal{R}$ is a $k$-\emph{literal over} $\tau$, if it is either of the form $\neg \sigma R$ or $\sigma R$, for a $k$-ary relation symbol $R\in \tau$. A maximally consistent set $\rho$ of $k$-literals over $\sigma$ is called a $k$-\emph{table} over $\tau$. We identify tables $\rho$ with terms $\bigcap_{\alpha \in \rho} \alpha$.

Given a Kripke model $\mathfrak{M}$ over $\tau$ and $(w_1,\dots,w_k) \in W^k$ we say that $(w_1,\dots,w_k)$ \emph{realizes} a $k$-table $\rho$, if for every $k$-literal $\alpha$ we have that
\[(w_1,\dots,w_k) \in \llbracket \alpha \rrbracket_\mathfrak{M} \Leftrightarrow \alpha \in \rho.\]
Note that since tables are maximally consistent, each tuple in a given Kripke model realizes a unique table. Conversely, the accessibility relations of a Kripke model can be described completely by specifying what tables different tuples realize.

Given a $k$-table $\rho$ and $\sigma \in S_k$ we define
\[\sigma[\rho] := \bigcap_{\sigma' R \in \rho} (\sigma \circ \sigma')R \cap \bigcap_{\neg \sigma' R\in \rho} \neg (\sigma \circ \sigma')R.\]
Note that for every $k$-table $\rho$ and $\sigma_1,\sigma_2 \in S_k$ we have that $\sigma_1[\sigma_2[\rho]] = (\sigma_1 \circ \sigma_2)[\rho]$. Furthermore, we note that it can be the case that $\sigma[\rho] = \rho$, even if $\sigma$ is not the identity permutation.\footnote{Consider for example the permutation $s$ and the $2$-table $\{R,sR\}$ over $\{R\}$, where $R$ is a binary relation.}

Let $\Phi$ be a set of propositional symbols, $\tau$ a set of relation symbols and $\mathcal{F} \subseteq \{p,s,\neg,\cap,\backslash,\cup\}$. The set of formulas $\mathrm{PML}(\mathcal{F})[\tau,\Phi]$ is generated by the following grammar
\[\varphi ::= p \mid \neg \varphi \mid (\varphi \land \varphi) \mid \langle \mathcal{R} \rangle (\underbrace{\varphi,\dots,\varphi}_{k\text{-times}})\]
where $p \in \Phi$ and $\mathcal{R} \in \gra(\mathcal{F})[\tau]$ is a $(k+1)$-ary term. We will use standard shorthand notations such as $\varphi \lor \psi := \neg (\neg \varphi \land \neg \psi)$ and $[\mathcal{R}] := \neg \langle \mathcal{R} \rangle \neg$.  Given a formula $\varphi$ we let $\subf(\varphi)$ denote the set of subformulas of $\varphi$.

We will use $\PML(\mathcal{F})[\tau,\Phi] + \E$ to denote the set of formulas generated by the grammar of $\PML(\mathcal{F})[\tau,\Phi]$ extended with the rule
\[\varphi ::= \E \varphi\]
If $\tau$ contains only binary relation symbols, then we emphasize this by writing $\mathrm{ML}(\mathcal{F})[\tau,\Phi]$. If we are only considering sets of relation symbols $\tau$ of size at most some fixed constant $c$, then we emphasize this by writing $\PML_c[\Phi]$. Thus $\PML_c[\tau,\Phi]$ entails that $|\tau| \leq c$. As in the case of Kripke models, for the rest of this paper we will never explicitly mention the underlying set of propositional symbols.

The semantics of our logics are standard and we will only recall here the semantic clauses of $\langle \mathcal{R} \rangle (\psi_1,\dots,\psi_k)$ and $\E \psi$. Given a Kripke model $\mathfrak{M}$ over $\tau$ and $w\in W$ we define
\[\mathfrak{M},w \Vdash \langle \mathcal{R} \rangle (\psi_1,\dots,\psi_k) \Leftrightarrow\]
\[\text{there exists } (w,w_1,\dots,w_k) \in \llbracket \mathcal{R} \rrbracket_\mathfrak{M} \text{ such that } \mathfrak{M},w_\ell \Vdash \psi_\ell \text{, for every } 1\leq \ell \leq k\]
and
\[\mathfrak{M},w \Vdash \E \psi \Leftrightarrow \text{there exists some } w' \in W \text{ such that } \mathfrak{M},w' \Vdash \psi.\]
We will conclude this section by stating the complexity of the satisfiability problem of $\PML + \E$.

\begin{proposition}
    The satisfiability problem of $\PML + \E$ is \textsc{ExpTime}-complete.
\end{proposition}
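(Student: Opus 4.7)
The plan is to establish the two bounds separately. For the lower bound, I would observe that standard multimodal logic with the global diamond, $\mathrm{ML}+\E$, already has an \textsc{ExpTime}-hard satisfiability problem, since it subsumes $\mathcal{ALC}$ under general TBoxes (equivalently, $\mathcal{ALC}$ with the universal role), a classical \textsc{ExpTime}-hard problem. Because binary accessibility relations are permitted in $\PML+\E$, the logic $\mathrm{ML}+\E$ embeds syntactically and the hardness transfers immediately.

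For the upper bound, my plan is to give a polynomial, satisfiability-preserving translation into the forward guarded fragment $\mathrm{FGF}$ from \cite{Bednarczyk2021ExploitingFS}, whose satisfiability problem is in \textsc{ExpTime}. I would define a standard-translation-style map $\mathrm{tr}_x(\cdot)$ sending each $\PML+\E$ formula to a first-order formula with one free variable $x$: the atomic and Boolean cases are mapped homomorphically, and the modal cases are set to
\[\mathrm{tr}_x(\langle \mathcal{R} \rangle(\psi_1,\dots,\psi_k)) \;=\; \exists y_1 \cdots \exists y_k \Bigl( \mathcal{R}(x,y_1,\dots,y_k) \wedge \bigwedge_{\ell=1}^k \mathrm{tr}_{y_\ell}(\psi_\ell) \Bigr),\]
together with $\mathrm{tr}_x(\E \psi) = \exists y\, \mathrm{tr}_y(\psi)$, which $\mathrm{FGF}$ accommodates through its global-diamond construct. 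Each existential block is forward-guarded by the atom $\mathcal{R}(x,y_1,\dots,y_k)$, whose arguments comprise precisely the ambient free variable together with all freshly quantified variables, so the image lies in $\mathrm{FGF}$. The translation is of linear size and satisfiability-preserving in the usual sense, so the upper bound follows directly.

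There is no substantial technical obstacle; the only subtlety is verifying that the exact syntactic shape required by $\mathrm{FGF}$ in \cite{Bednarczyk2021ExploitingFS} really does allow the leading ``current world'' variable $x$ to appear in the guard without being bound by the quantifier block. Should a self-contained argument be preferred instead of invoking $\mathrm{FGF}$, an equally direct proof can be obtained by type elimination: enumerate the at most $2^{|\subf(\varphi)|}$ maximally consistent subsets of $\subf(\varphi)$, iteratively discard those containing a polyadic diamond or an $\E$-requirement that cannot be witnessed using the currently surviving types while respecting the dual box constraints, and accept iff some surviving type contains $\varphi$. Correctness is routine and the algorithm runs in single-exponential time.
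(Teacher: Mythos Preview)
Your proposal is correct and matches the paper's own argument essentially line for line: the paper also derives the lower bound from the \textsc{ExpTime}-hardness of $\mathrm{ML}+\E$ and obtains the upper bound by observing that the standard translation of $\PML+\E$ lands in the forward guarded fragment $\mathrm{FGF}$ of \cite{Bednarczyk2021ExploitingFS}. Your write-up is in fact more explicit than the paper's (which merely says ``by using variables carefully''), and the alternative type-elimination sketch you give is a sound self-contained backup, though the paper does not pursue it.
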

\begin{proof}
    Lower bound follows from the well-known fact that the satisfiability problem of $\mathrm{ML} + \E$ is \textsc{ExpTime}-complete \cite[Exercise 6.8.1]{DBLP:books/cu/BlackburnRV01}. Upper bound follows, say, from the recent result that the satisfiability problem of the so-called \emph{guarded forward fragment} FGF is \textsc{ExpTime}-complete \cite{Bednarczyk2021ExploitingFS}. Indeed, by using variables carefully, one can guarantee that the standard translation of $\PML + \E$ into FO produces sentences of FGF.
\end{proof}

As mentioned in the introduction, existing results in the literature imply that the satisfiability problem of $\PML(p,s,\neg,\cap)$ is \textsc{NExpTime}-complete. 

\begin{proposition}
    The satisfiability problem of $\PML(p,s,\neg,\cap)$ is \textsc{NExpTime}-complete.
\end{proposition}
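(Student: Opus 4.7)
The plan is to obtain both bounds by invoking existing results hinted at just above the statement: \textsc{NExpTime}-hardness transferred from Boolean modal logic, and \textsc{NExpTime} membership transferred from the uniform one-dimensional fragment $\mathrm{U}_1$.

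For the lower bound, I would observe that $\PML(p,s,\neg,\cap)$ contains $\mathrm{ML}(\neg,\cap)$ as a syntactic fragment: an $\mathrm{ML}(\neg,\cap)$-formula is already a $\PML(p,s,\neg,\cap)$-formula over a binary vocabulary in which the operators $p$ and $s$ do not occur, and Kripke semantics agrees on the nose. Since the satisfiability problem of $\mathrm{ML}(\neg,\cap)$ (equivalent via de~Morgan to full Boolean modal logic) is \textsc{NExpTime}-hard by \cite{Lutz2000TheCO}, the lower bound transfers immediately.

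For the upper bound, the plan is to build a polynomial-time satisfiability-preserving translation $\cdot^{*}$ from $\PML(p,s,\neg,\cap)$ into $\mathrm{U}_1$, whose satisfiability problem is in \textsc{NExpTime} by \cite{Kieronski2014ComplexityAE}. Using Lemma~\ref{lemma:simple_equations}, I would first push negations inside every term $\mathcal{R}$ down to literals of the form $\sigma R$ or $\neg \sigma R$, turning $\mathcal{R}$ into a positive Boolean combination of literals. Then the standard translation sends $\langle \mathcal{R} \rangle(\psi_1,\dots,\psi_k)$ at $x$ to
\[\exists y_1 \cdots y_k \Bigl(\mathcal{R}^{\dagger}(x, y_1, \dots, y_k) \;\land\; \bigwedge_{\ell=1}^{k} P_{\psi_\ell}(y_\ell)\Bigr),\]
where each literal $\sigma R$ inside $\mathcal{R}^{\dagger}$ becomes the atom $R$ applied to the correspondingly permuted tuple drawn from $(x, y_1, \dots, y_k)$, and each $P_{\psi_\ell}$ is a fresh unary label predicate globally axiomatised by $\forall z (P_{\psi_\ell}(z) \leftrightarrow \psi_\ell^{*}(z))$.

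The main thing to verify, and the likely main obstacle, is that the resulting sentence really lies in $\mathrm{U}_1$: inside every existential block, all non-unary atoms must share exactly the free-variable set $\{x, y_1,\dots,y_k\}$. This holds by construction, since every literal in $\mathcal{R}^{\dagger}$ refers to a $(k{+}1)$-ary relation evaluated on some permutation of $(x, y_1, \dots, y_k)$, while the label atoms $P_{\psi_\ell}$ are unary and thus impose no uniformity constraint. The Tseitin-style labelling also keeps the translation polynomial regardless of the Boolean structure of $\mathcal{R}$. Once this uniformity check is carried out, the \textsc{NExpTime} upper bound for $\mathrm{U}_1$ yields the matching upper bound.
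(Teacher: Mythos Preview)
Your proposal is correct and follows essentially the same route as the paper: the lower bound via the syntactic inclusion of $\mathrm{ML}(\neg,\cap)$, and the upper bound via containment in $\mathrm{U}_1$ together with the \textsc{NExpTime} bound from \cite{Kieronski2014ComplexityAE}. The only difference is that the paper simply invokes \cite[Theorem~7]{KuusistoUfSurvey} for the containment $\PML(p,s,\neg,\cap)\subseteq \mathrm{U}_1$, whereas you sketch the standard translation explicitly; the content is the same.
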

\begin{proof}
    Lower bound follows from the fact that the satisfiability problem of $\mathrm{ML}(\neg,\cap)$ is already \textsc{NExpTime}-hard \cite{ComplexityReasoningBooleanModalLogics}. For upper bound we note that $\PML(p,s,\neg,\cap)$ is contained in $\mathrm{U}_1$ \cite[Theorem 7]{KuusistoUfSurvey}, for which the satisfiability problem is \textsc{NExpTime}-complete \cite{Kieronski2014ComplexityAE}.
\end{proof}

\section{Satisfiability problem of $\PML(\neg)$}\label{sec:negation_sat}

In this section we will show how the satisfiability problem of $\PML(\neg)$ can be reduced efficiently (in polynomial time) to that of $\PML + \E$, which will yield the following result.

\begin{theorem}\label{thm:neg_sat_complexity}
    The satisfiability problem of $\PML(\neg)$ is \textsc{ExpTime}-complete.
\end{theorem}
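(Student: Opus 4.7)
The theorem has two parts. The lower bound is inherited from the \textsc{ExpTime}-hardness of $\mathrm{ML}(\neg)$-satisfiability \cite{Lutz2000TheCO}, since $\mathrm{ML}(\neg)$ coincides with the binary-vocabulary fragment of $\PML(\neg)$. The matching upper bound will be established by a polynomial-time reduction to $\PML + \E$-satisfiability, which is \textsc{ExpTime}-complete by the preceding proposition.

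The reduction proceeds as follows. Given $\varphi \in \PML(\neg)[\tau]$, first use Lemma~\ref{lemma:simple_equations}(1) to ensure every relation term in $\varphi$ is a literal ($R$ or $\neg R$). For each $R \in \tau$, introduce a fresh relation symbol $\overline{R}$ of the same arity and set $\tau^+ := \tau \cup \{\overline{R} : R \in \tau\}$. Define $\varphi^\circ \in \PML[\tau^+]$ by syntactically replacing every literal $\neg R$ appearing inside a diamond with $\overline{R}$; the resulting $\varphi^\circ$ contains no relation operators. The reduction outputs $\varphi^\circ \land \chi$, where $\chi \in \PML + \E[\tau^+]$ is a \emph{compatibility axiom} enforcing that $R$ and $\overline{R}$ together play the role of $R$ and $\neg R$.

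Equisatisfiability is then verified by two model transformations. For the forward direction, given $\mathfrak{M} \models \varphi$, extend $\mathfrak{M}$ to $\tau^+$ by interpreting $\overline{R}^{\mathfrak{M}^+} := W^{\mathrm{ar}(R)} \setminus R^\mathfrak{M}$; a straightforward induction on $\subf(\varphi)$ then yields $\mathfrak{M}^+ \models \varphi^\circ \land \chi$. For the backward direction, given $\mathfrak{M}^+ \models \varphi^\circ \land \chi$, construct $\mathfrak{M} \models \varphi$ by redefining $R^\mathfrak{M} := W^{\mathrm{ar}(R)} \setminus \overline{R}^{\mathfrak{M}^+}$ (so that $\neg R^\mathfrak{M}$ matches $\overline{R}^{\mathfrak{M}^+}$ pointwise) and taking the $\tau$-reduct; the role of $\chi$ is to guarantee that this redefinition keeps every subformula of $\varphi^\circ$ invariant.

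The main obstacle is designing $\chi$ within the restricted expressive power of $\PML + \E$. A first-order axiomatization of compatibility would demand $\forall \bar v\,(R(\bar v) \leftrightarrow \neg \overline{R}(\bar v))$ for each $R \in \tau$, but $\PML + \E$ cannot universally quantify over arbitrary tuple coordinates---only over the first coordinate, via the global box $\A := \neg \E \neg$. The key observation to circumvent this is that the truth value of $\varphi^\circ$ depends on the interpretations of $R$ and $\overline{R}$ only through the finitely many subformula marks in $\subf(\varphi)$: two tuples indistinguishable by these marks are interchangeable in $\varphi^\circ$'s evaluation. Consequently, $\chi$ can be built as a conjunction of $\A$-quantified statements indexed by the diamonds that actually occur in $\varphi$, enforcing disjointness and coverage of $R$ and $\overline{R}$ at the level of these finitely many configurations. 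Soundness of this weakened axiomatization follows from a bisimulation-style argument: any tuple missing from $R \cup \overline{R}$ can be safely placed on either side without altering the truth of any subformula of $\varphi^\circ$. Since the reduction is polynomial and $\PML + \E$-satisfiability is \textsc{ExpTime}, the upper bound follows.
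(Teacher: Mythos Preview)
Your overall strategy and the forward direction match the paper's. The gap is in the backward direction. You propose to set $R^{\mathfrak{M}} := W^{\mathrm{ar}(R)} \setminus \overline{R}^{\mathfrak{M}^+}$ on the \emph{same} domain $W$, relying on $\chi$ to keep all subformulas of $\varphi^\circ$ invariant. But no $\PML + \E$-formula $\chi$ can rule out overlap between $R^{\mathfrak{M}^+}$ and $\overline{R}^{\mathfrak{M}^+}$, and when they overlap your redefinition loses witnesses: a tuple in $R^{\mathfrak{M}^+} \cap \overline{R}^{\mathfrak{M}^+}$ is excluded from $R^{\mathfrak{M}}$, so any $\langle R\rangle(\psi_1,\dots,\psi_k)$ whose only witness was that tuple becomes false. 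Concretely, for $\varphi = \langle R\rangle p \land \langle \neg R\rangle p$ take $W = \{w,v\}$, $V(p)=\{v\}$, and $R^{\mathfrak{M}^+} = \overline{R}^{\mathfrak{M}^+} = \{(w,v),(v,v)\}$. Then $\mathfrak{M}^+,w \Vdash \varphi^\circ$, and $\mathfrak{M}^+$ satisfies the paper's coverage axiom $\eta$ (its antecedent $\E(\neg\langle R\rangle p \land \neg\langle \overline{R}\rangle p)$ is false), yet your reduct has $R^{\mathfrak{M}} = \{(w,w),(v,w)\}$ and hence $\mathfrak{M},w \not\Vdash \langle R\rangle p$. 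Your own justification (``any tuple \emph{missing} from $R \cup \overline{R}$ can be placed on either side'') addresses only coverage, not overlap; you mention disjointness but never argue it, and it cannot be enforced in $\PML + \E$.

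The paper handles overlap not by strengthening the axiom but by changing the model construction: after using $\eta$ (via Lemma~\ref{lemma:negation_covering_property}) to extend $R_1,R_2$ so that every tuple lies in at least one of them, it builds the target model on the doubled domain $W \times \{0,1\}$, routing $R$-witnesses to the opposite layer and $\neg R$-witnesses within the same layer. This duplication is precisely what separates tuples lying in both $R_1^{\mathfrak{M}}$ and $R_2^{\mathfrak{M}}$, and it is the missing ingredient in your argument.
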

It seems most probable that the above complexity result continues to hold for $\PML(\sigma,\neg)$, but we have not yet been able to show this.

Before presenting the reduction from $\PML(\neg)$ to $\PML + \E$, we first describe one brief application of Theorem \ref{thm:neg_sat_complexity}, which reflects one original motivation for the study of Boolean modal logics \cite{ComplexityReasoningBooleanModalLogics}. Namely, we show that the logic $\PML$ extended with \emph{polyadic window operator}, which we denote by $\PML + \nabla$, has a \textsc{ExpTime}-complete satisfiability problem. Consider a vocabulary $\tau$. The grammar for generating the formulas of $\PML[\tau] + \nabla$ is the grammar of $\PML[\tau]$ extended with the rule
\[\varphi ::= \nabla_R (\underbrace{\varphi,\dots,\varphi}_{k\text{-times}}),\]
where $R$ is a $(k+1)$-ary accessibility relation, for every $R\in \tau$. The semantics of formulas of the form $\nabla_R(\psi_1,\dots,\psi_k)$ is defined as follows:
\[\mathfrak{M},w \Vdash \nabla_R (\psi_1,\dots,\psi_k) \Leftrightarrow \text{For every } (w_1,\dots,w_k) \in W^k \text{ we have that if}\]
\[\mathfrak{M},w_\ell \Vdash \psi_\ell \text{, for every $1\leq \ell \leq k$, then } (w,w_1,\dots,w_k) \in R^\mathfrak{M}.\]
It is easy to see that $\mathfrak{M},w \Vdash \nabla_R(\psi_1,\dots,\psi_k)$ is equivalent with
\[\mathfrak{M},w \Vdash [ \neg R ] (\neg \psi_1,\dots,\neg \psi_k)\]
and hence Theorem \ref{thm:neg_sat_complexity} immediately implies the following complexity results.

\begin{theorem}
    The satisfiability problem of $\PML + \nabla$ is \textsc{ExpTime}-complete.
\end{theorem}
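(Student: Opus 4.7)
The plan is to observe that the theorem is essentially an immediate corollary of Theorem \ref{thm:neg_sat_complexity} combined with the semantic identity
\[\nabla_R(\psi_1,\dots,\psi_k) \equiv [\neg R](\neg \psi_1,\dots,\neg \psi_k)\]
that was just established. This identity is the right tool for both directions, and the main work has already been done in the preceding section.

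For the upper bound, I would define a translation $t:\PML + \nabla \to \PML(\neg)$ by structural recursion that is the identity on propositional symbols, commutes with Boolean connectives and polyadic diamonds $\langle \mathcal{R} \rangle$, and rewrites $\nabla_R(\psi_1,\dots,\psi_k)$ as $[\neg R](t(\neg \psi_1),\dots,t(\neg \psi_k))$. Since each occurrence of $\nabla_R$ is replaced by a constant-size gadget using $\neg R$ and $[\ldots]$, the translation $t$ is computable in polynomial time, and $t(\varphi)$ is logically equivalent to $\varphi$. Satisfiability of $\varphi$ therefore reduces in polynomial time to satisfiability of $t(\varphi) \in \PML(\neg)$, and the claimed \textsc{ExpTime} upper bound follows from Theorem \ref{thm:neg_sat_complexity}.

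For the lower bound, I would go in the opposite direction and reduce $\mathrm{ML}(\neg)$ to $\mathrm{ML} + \nabla \subseteq \PML + \nabla$, using the dual identities $[\neg R]\psi \equiv \nabla_R \neg \psi$ and $\langle \neg R \rangle \psi \equiv \neg \nabla_R \psi$ (the binary special case of the same semantic equivalence, combined with modal duality). A structural-recursion translation eliminates every occurrence of $\neg R$ inside a modality in favour of a window operator of equal size, again in polynomial time. Since the satisfiability problem of $\mathrm{ML}(\neg)$ is known to be \textsc{ExpTime}-hard \cite{Lutz2000TheCO}, we obtain \textsc{ExpTime}-hardness of $\PML + \nabla$.

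The only point to double-check is that both translations are genuinely polynomial rather than merely polynomial-time: each syntactic occurrence of $\nabla_R$ (respectively, $[\neg R]$ or $\langle \neg R \rangle$) is replaced by a bounded-size gadget wrapping the translations of the same subformulas, so no subformula is duplicated and the output size is linear. There is no real obstacle here; the result is mostly a packaging of Theorem \ref{thm:neg_sat_complexity} together with the observation already made immediately before the theorem statement.
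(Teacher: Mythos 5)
Your proposal is correct and takes essentially the same route as the paper: the paper's entire proof is the observation that $\nabla_R(\psi_1,\dots,\psi_k)$ is equivalent to $[\neg R](\neg\psi_1,\dots,\neg\psi_k)$, from which both the upper and lower bounds follow immediately via Theorem~\ref{thm:neg_sat_complexity}. You merely make explicit the two linear-size structural translations (and the appeal to \textsc{ExpTime}-hardness of $\mathrm{ML}(\neg)$ for the lower bound) that the paper leaves implicit.
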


Now we will present the reduction. Fix a formula $\varphi \in \PML(\neg)$ and let $\tau$ denote the set of accessibility relations occuring in $\varphi$. For every symbol $R \in \tau$, we will introduce two fresh symbols of the same arity, $R_1$ and $R_2$. Given $\psi \in \mathrm{Subf}(\varphi)$, we let $t(\psi)$ denote the formula obtained from $\psi$ by replacing each $\langle R \rangle$ with $\langle R_1 \rangle$ and each $\langle \neg R \rangle$ with $\langle R_2 \rangle$. Consider then the following formula $\theta := t(\varphi) \land \eta$, where
\[\eta := \bigwedge_{\substack{\langle R_1 \rangle (\psi_1,...,\psi_k), \\ \langle R_2 \rangle (\chi_1,...,\chi_k) \\ \in \subf(t(\varphi))}} \bigg(\E (\neg \langle R_1 \rangle (\psi_1,\dots,\psi_k) \land \neg \langle R_2 \rangle (\chi_1,\dots,\chi_k))\]
\[\to \bigvee_{1\leq \ell \leq k} \A (\neg \psi_\ell \lor \neg \chi_\ell) \bigg).\]
Here (and elsewhere) $\A$ is shorthand notation for $\neg \E \neg$. Intuitively speaking, in every model of $\eta$ we can extend the interpretations of $R_1$ and $R_2$, for $R \in \tau$, in such a way that they cover $W^{ar(R)}$, i.e., every tuple of length $ar(R)$ belongs either to the interpretation of $R_1$ or to the interpretation of $R_2$, while maintaining that the resulting model is a model of $t(\varphi)$, if the original model was.

Since the big conjunction in $\eta$ ranges over only those formulas that occur as subformulas in $\varphi$, the size of $\eta$ is $O(|\varphi|^2)$, i.e., polynomial with respect to $|\varphi|$. The rest of this section is devoted to proving that the above reduction is correct, i.e., $\varphi$ is satisfiable iff $\theta$ is. We will start with the left to right direction.

\begin{lemma}
    If $\varphi$ is satisfiable, then so is $\theta$.
\end{lemma}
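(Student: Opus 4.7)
The plan is to produce a model of $\theta$ directly from a model of $\varphi$, exploiting the fact that the conjunct $\eta$ was designed so that its validity only requires us to interpret the fresh symbols $R_1, R_2$ in a complementary way. Concretely, given a Kripke model $\mathfrak{M} = (W,(R^{\mathfrak{M}})_{R\in\tau},V)$ with $\mathfrak{M}, w_0 \Vdash \varphi$, I would define $\mathfrak{N}$ to have the same universe $W$ and the same propositional valuation $V$, and for each $R \in \tau$ set $R_1^{\mathfrak{N}} := R^{\mathfrak{M}}$ and $R_2^{\mathfrak{N}} := W^{ar(R)} \setminus R^{\mathfrak{M}}$. The key property is that $R_1^{\mathfrak{N}} \cup R_2^{\mathfrak{N}} = W^{ar(R)}$ for every $R \in \tau$.

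The first step is a routine induction on $\psi \in \subf(\varphi)$ showing that $\mathfrak{M}, w \Vdash \psi$ iff $\mathfrak{N}, w \Vdash t(\psi)$. This is immediate from the identities $\llbracket R_1 \rrbracket_{\mathfrak{N}} = \llbracket R \rrbracket_{\mathfrak{M}}$ and $\llbracket R_2 \rrbracket_{\mathfrak{N}} = \llbracket \neg R \rrbracket_{\mathfrak{M}}$, together with the fact that $t$ only renames the outermost accessibility term in every diamond. In particular, $\mathfrak{N}, w_0 \Vdash t(\varphi)$.

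The main content lies in verifying $\mathfrak{N}, w_0 \Vdash \eta$. Since $\eta$ is a conjunction of implications guarded by the global modalities $\E$ and $\A$, its truth value is independent of the evaluation world, so I only need to check each conjunct. Fix a pair of subformulas $\langle R_1 \rangle (\psi_1,\dots,\psi_k)$ and $\langle R_2 \rangle (\chi_1,\dots,\chi_k)$ in $\subf(t(\varphi))$, and assume the antecedent holds, i.e., there exists $w' \in W$ with $\mathfrak{N}, w' \Vdash \neg \langle R_1 \rangle (\psi_1,\dots,\psi_k) \land \neg \langle R_2 \rangle (\chi_1,\dots,\chi_k)$. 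I would prove the consequent by contradiction: assume that for every $1 \leq \ell \leq k$ there is a world $w''_\ell$ with $\mathfrak{N}, w''_\ell \Vdash \psi_\ell \land \chi_\ell$. Now consider the tuple $(w', w''_1, \dots, w''_k)$; by the partition property noted above, it lies either in $R_1^{\mathfrak{N}}$ or in $R_2^{\mathfrak{N}}$. In the first case one obtains $\mathfrak{N}, w' \Vdash \langle R_1 \rangle (\psi_1,\dots,\psi_k)$, and in the second $\mathfrak{N}, w' \Vdash \langle R_2 \rangle (\chi_1,\dots,\chi_k)$, contradicting the antecedent in either case.

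The only non-routine ingredient is the partition observation $R_1^{\mathfrak{N}} \cup R_2^{\mathfrak{N}} = W^{ar(R)}$, which is precisely the design feature of the reduction; once this is in hand, the argument for $\eta$ is a direct contradiction and the induction for $t(\varphi)$ is trivial. Hence this direction should not present any substantive obstacle — unlike the harder converse direction, which will require genuinely extending the interpretations of the $R_i$'s so as to cover $W^{ar(R)}$ while preserving satisfaction of $t(\varphi)$.
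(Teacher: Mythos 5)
Your proposal is correct and follows essentially the same route as the paper: interpret $R_1$ as $R^{\mathfrak{M}}$ and $R_2$ as its complement, observe that $t(\varphi)$ is then immediate, and verify $\eta$ by the same contradiction argument based on the partition $R_1^{\mathfrak{N}} \cup R_2^{\mathfrak{N}} = W^{ar(R)}$. No substantive differences.
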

\begin{proof}
    Let $\mathfrak{M} = (W,(R)_{R\in \tau},V)$ be a Kripke model and let $w\in W$ be a world so that $\mathfrak{M},w \Vdash \varphi$. We then define the Kripke model $\mathfrak{N} = (W,(R_1)_{R\in \tau},(R_2)_{R\in \tau}, V)$ by setting that for every $R\in \tau$, $R_1^\mathfrak{N} = R^\mathfrak{M}$ and $R_2^\mathfrak{N} = W^{ar(R)} \backslash R^\mathfrak{M}$. Clearly $\mathfrak{N},w \Vdash t(\varphi)$. To verify that $\mathfrak{N}$ satisfies $\eta$, suppose that there exists $\langle R \rangle (\psi,...,\psi_k),\langle \neg R \rangle (\chi_1,...,\chi_k) \in \mathrm{Subf}(t(\varphi))$ and $w_0 \in W$ so that
    \[\mathfrak{N},w_0 \Vdash \neg \langle R_1\rangle (\psi_1,\dots,\psi_k) \land \neg \langle R_2 \rangle (\chi_1,\dots,\chi_k)\]
    but 
    \[\mathfrak{N},w \not\Vdash \bigvee_{1\leq \ell \leq k} \A(\neg \psi_\ell \lor \neg \chi_\ell).\]
    Thus for every $1\leq \ell \leq k$ there exists $w_\ell \in W$ so that $\mathfrak{N},w_\ell \Vdash \psi_\ell \land \chi_\ell$. By construction, we must either have that $(w_0,w_1,...,w_k) \in R_1^\mathfrak{N}$ or $(w_0,w_1,...,w_k)\in R_2^\mathfrak{N}$, but clearly both of these cases lead to a contradiction.
\end{proof}

Suppose then that $\theta$ is satisfiable. $\mathfrak{M}$ be a Kripke model and let $w \in W$ be a world so that $\mathfrak{M},w \Vdash \theta$.

\begin{lemma}\label{lemma:negation_covering_property}
    For every $(w_0,w_1,\dots,w_k)\in W^{k+1}$ and $R \in \tau$ there exists $i\in \{1,2\}$ so that for every $\langle R_i \rangle (\psi_1,\dots,\psi_k) \in \subf(t(\varphi))$ we have that if $\mathfrak{M},w_\ell \Vdash \psi_\ell$, for every $1\leq \ell \leq k$, then $\mathfrak{M},w_0 \Vdash \langle R_i \rangle (\psi_1,\dots,\psi_k)$.
\end{lemma}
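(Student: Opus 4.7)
The plan is to proceed by contradiction. Suppose the conclusion fails for some tuple $(w_0,w_1,\dots,w_k) \in W^{k+1}$ and some $R \in \tau$. Then both choices $i=1$ and $i=2$ fail, which means there must exist subformulas
\[\langle R_1 \rangle (\psi_1,\dots,\psi_k), \ \langle R_2 \rangle (\chi_1,\dots,\chi_k) \in \subf(t(\varphi))\]
such that $\mathfrak{M},w_\ell \Vdash \psi_\ell$ and $\mathfrak{M},w_\ell \Vdash \chi_\ell$ for every $1 \leq \ell \leq k$, while simultaneously $\mathfrak{M},w_0 \not\Vdash \langle R_1 \rangle (\psi_1,\dots,\psi_k)$ and $\mathfrak{M},w_0 \not\Vdash \langle R_2 \rangle (\chi_1,\dots,\chi_k)$. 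Note that since both failing subformulas are bound by the same underlying accessibility symbol $R$, their arities agree, so the pair $(\langle R_1\rangle(\psi_1,\dots,\psi_k), \langle R_2\rangle(\chi_1,\dots,\chi_k))$ is one of the pairs indexing the big conjunction defining $\eta$.

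Next, I would observe that the existence of $w_0$ witnesses that
\[\mathfrak{M},w \Vdash \E\bigl(\neg\langle R_1 \rangle(\psi_1,\dots,\psi_k) \land \neg\langle R_2 \rangle(\chi_1,\dots,\chi_k)\bigr),\]
because the semantics of $\E$ only require the existence of \emph{some} world (here $w_0$) in $W$ satisfying the formula. Since $\mathfrak{M},w \Vdash \theta$ and in particular $\mathfrak{M},w \Vdash \eta$, the matching implication in $\eta$ forces
\[\mathfrak{M},w \Vdash \bigvee_{1 \leq \ell \leq k} \A(\neg \psi_\ell \lor \neg \chi_\ell).\]
Hence there exists some index $\ell_0$ such that $\mathfrak{M},w' \Vdash \neg \psi_{\ell_0} \lor \neg \chi_{\ell_0}$ holds for every $w' \in W$.

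Finally, I would instantiate $w' := w_{\ell_0}$ to reach the contradiction: we already established $\mathfrak{M},w_{\ell_0} \Vdash \psi_{\ell_0} \land \chi_{\ell_0}$, which directly contradicts $\mathfrak{M},w_{\ell_0} \Vdash \neg \psi_{\ell_0} \lor \neg \chi_{\ell_0}$. Thus the assumption was false, and the desired covering index $i \in \{1,2\}$ exists. The argument is essentially a bookkeeping exercise; the only mild subtlety is to recognise that failing the covering property for both values of $i$ simultaneously yields precisely the pair of subformulas of $t(\varphi)$ required to activate one specific conjunct of $\eta$, and that $\E$ being evaluated at the globally fixed world $w$ does not cost us anything since it quantifies over all of $W$.
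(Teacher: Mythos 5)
Your proof is correct and follows essentially the same argument as the paper's: negate the covering property to extract the pair of subformulas $\langle R_1\rangle(\psi_1,\dots,\psi_k)$ and $\langle R_2\rangle(\chi_1,\dots,\chi_k)$, use $w_0$ to witness the $\E$-antecedent of the relevant conjunct of $\eta$, and derive a contradiction from the resulting disjunction of $\A$-formulas at the worlds $w_\ell$. The only difference is that you spell out the final instantiation $w' := w_{\ell_0}$, which the paper leaves as "a clear contradiction."
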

\begin{proof}
    Suppose that this is not the case. Thus there exists
    \[\langle R_1 \rangle (\psi_1,\dots,\psi_k), \langle R_2 \rangle (\chi_1,\dots,\chi_k) \in \subf(t(\varphi))\]
    so that $\mathfrak{M},w_\ell \Vdash \psi_\ell \land \chi_\ell$, for every $1\leq \ell \leq k$, but
    \[\mathfrak{M},w_0 \Vdash \neg \langle R_1 \rangle (\psi_1,\dots,\psi_k) \land \neg \langle R_2 \rangle (\chi_1,\dots,\chi_k).\]
    Since $\mathfrak{M},w \Vdash \eta$, we have that
    \[\mathfrak{M},w \Vdash \bigvee_{1\leq \ell \leq k} \A (\neg \psi_\ell \lor \neg \chi_\ell),\]
    which is a clear contradiction.
\end{proof}

Using Lemma \ref{lemma:negation_covering_property}, we can extend the model $\mathfrak{M}$ as follows. For every $(w_1,\dots,w_k) \not\in R_1^\mathfrak{M} \cup R_2^\mathfrak{M}$, we choose $i\in \{1,2\}$ with the properties described in Lemma \ref{lemma:negation_covering_property}, and add $(w_1,\dots,w_k)$ to $R_i^\mathfrak{M}$. We still use $\mathfrak{M}$ to denote the resulting model. We emphasize that $\mathfrak{M}$ has now the property that for every $(w_1,\dots,w_k)\in W^k$ and $R\in \tau$, either $(w_1,\dots,w_k) \in R_1^\mathfrak{M}$ or $(w_1,\dots,w_k) \in R_2^\mathfrak{M}$.

\begin{lemma}
    $\mathfrak{M},w \Vdash t(\varphi)$
\end{lemma}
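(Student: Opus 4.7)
The plan is to establish $\mathfrak{M}, w \Vdash t(\varphi)$ by proving a stronger invariant: for every subformula $\psi \in \subf(t(\varphi))$ and every $v \in W$, the truth of $\psi$ at $v$ is unchanged by the extension of $\mathfrak{M}$. Write $\mathfrak{M}_0$ for the pre-extension model and $\mathfrak{M}$ for the post-extension model (same domain $W$, same valuation $V$, but possibly enlarged $R_i$-relations). The goal is thus $\mathfrak{M}_0, v \Vdash \psi \Leftrightarrow \mathfrak{M}, v \Vdash \psi$ for all $\psi \in \subf(t(\varphi))$ and all $v \in W$; taking $\psi = t(\varphi)$ and $v = w$ then yields the lemma, because $\mathfrak{M}_0, w \Vdash \theta$ already entails $\mathfrak{M}_0, w \Vdash t(\varphi)$.

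The proof is by induction on $\psi$. The atomic case is trivial since $V$ is untouched; the Boolean cases ($\neg$ and $\land$) reduce immediately to the inductive hypothesis. Hence only the modal case $\psi = \langle R_i \rangle(\psi_1, \ldots, \psi_k)$, for some $R \in \tau$ and $i \in \{1,2\}$, requires genuine work.

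The forward direction is easy and only uses monotonicity: since the extension only adds tuples to each $R_i$, we have $R_i^{\mathfrak{M}_0} \subseteq R_i^\mathfrak{M}$, so any witness tuple in $\mathfrak{M}_0$ remains a witness in $\mathfrak{M}$, and the inductive hypothesis applied to $\psi_1, \ldots, \psi_k$ transports their truth values to $\mathfrak{M}$.

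The main obstacle is the backward direction, and this is precisely what Lemma \ref{lemma:negation_covering_property} was designed to handle. Suppose $\mathfrak{M}, v \Vdash \psi$ with a witness tuple $(v, v_1, \ldots, v_k) \in R_i^\mathfrak{M}$ satisfying $\mathfrak{M}, v_\ell \Vdash \psi_\ell$ for every $\ell$. The inductive hypothesis gives $\mathfrak{M}_0, v_\ell \Vdash \psi_\ell$. Either the tuple already lay in $R_i^{\mathfrak{M}_0}$, in which case $\mathfrak{M}_0, v \Vdash \psi$ is immediate, or the tuple was introduced during the extension step. In the latter case the index $i$ was by construction chosen using Lemma \ref{lemma:negation_covering_property} applied to this very tuple; the conclusion of that lemma, instantiated at the subformula $\langle R_i \rangle(\psi_1, \ldots, \psi_k) \in \subf(t(\varphi))$, together with the truths $\mathfrak{M}_0, v_\ell \Vdash \psi_\ell$ obtained from the inductive hypothesis, delivers $\mathfrak{M}_0, v \Vdash \psi$. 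The only subtlety is bookkeeping: ensuring that the tuple for which $i$ was chosen in the extension step coincides with the witness tuple at hand, and that the subformula in question is among those to which Lemma \ref{lemma:negation_covering_property} applies — both of which are built into the statement of that lemma.
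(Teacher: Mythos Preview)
Your proposal is correct and is precisely the ``routine induction'' the paper leaves to the reader: you show that truth of every $\psi \in \subf(t(\varphi))$ is preserved under the extension, with the only nontrivial modal case handled by monotonicity in one direction and by the defining property of the chosen index $i$ (Lemma~\ref{lemma:negation_covering_property}) in the other. This matches the paper's intended argument exactly.
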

\begin{proof}
    A routine induction.
\end{proof}

We now define a Kripke model $\mathfrak{N} = (W^*, (R^\mathfrak{N})_{R\in \tau}, V^*)$ over $\tau$ as follows. First, we specify that $W^* := W \times \{0,1\}$ and that for every $(w,i) \in W^*$ we have that $(w,i) \in V^*(p)$ iff $w\in V(p)$. Next we need to define interpretations of relation symbols $R\in \tau$. Fix such a relation symbol $R$. For every $(w_0,i) \in W^*$ we define that if $(w_0,w_1,\dots,w_k) \in R_1^\mathfrak{M}$, then 
\[((w_0,i),(w_1,i+1 \mod 2), \dots, (w_k,i+1 \mod 2)) \in R^\mathfrak{N}.\]
Then, for every $(w_0,w_1,\dots,w_k) \in R_2^\mathfrak{M}$ we define that
\[((w_0,i),(w_1,i),\dots,(w_k,i)) \not\in R^\mathfrak{N}.\]
Finally, for every $((w_0,i_0),\dots,(w_k,i_k))$ for which we have not specified whether they belong to $R^\mathfrak{N}$, we define that if $(w_0,\dots,w_k) \not\in R_2^\mathfrak{M}$ then $((w_0,i_0),\dots,(w_k,i_k)) \in R^\mathfrak{N}$.

\begin{lemma}\label{lemma:negation_final_model_construction}
    For every $\psi \in \subf(\varphi)$ and $w_0\in W$ we have that
    \[\mathfrak{N},(w_0,i_0) \Vdash \psi \Leftrightarrow \mathfrak{M},w_0 \Vdash t(\psi).\]
\end{lemma}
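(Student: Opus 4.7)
The plan is to proceed by structural induction on $\psi \in \subf(\varphi)$, showing the equivalence simultaneously for both values of $i_0 \in \{0,1\}$. The base case of a propositional variable is immediate from the definition $V^*(p) = V(p) \times \{0,1\}$, and the Boolean cases $\neg \psi'$ and $\psi_1 \land \psi_2$ go through routinely because the translation $t$ commutes with the Boolean connectives.

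The substantive content lies in the two modal cases. I would first extract the following characterization of $R^\mathfrak{N}$ by tracing through the three construction clauses. Call a lifted tuple $((w_0,i_0),(w_1,i_1),\dots,(w_k,i_k))$ \emph{alternating} if $i_1 = \cdots = i_k = 1-i_0$, and \emph{uniform} if $i_0 = i_1 = \cdots = i_k$. Then:
\begin{enumerate}
    \item if $(w_0,\dots,w_k) \in R_1^\mathfrak{M}$, the alternating lifted tuple is in $R^\mathfrak{N}$; and conversely, if \emph{any} lifted tuple is in $R^\mathfrak{N}$, then $(w_0,\dots,w_k) \in R_1^\mathfrak{M}$;
    \item if $(w_0,\dots,w_k) \in R_2^\mathfrak{M}$, the uniform lifted tuple is not in $R^\mathfrak{N}$; and conversely, if \emph{any} lifted tuple is not in $R^\mathfrak{N}$, then $(w_0,\dots,w_k) \in R_2^\mathfrak{M}$.
\end{enumerate}
The ``conversely'' parts use that the default clause adds a lifted tuple to $R^\mathfrak{N}$ precisely when the underlying tuple lies outside $R_2^\mathfrak{M}$, which by the covering property established via Lemma~\ref{lemma:negation_covering_property} happens exactly when it lies in $R_1^\mathfrak{M}$ but not in $R_2^\mathfrak{M}$.

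With these observations in hand, the case $\psi = \langle R \rangle(\psi_1,\dots,\psi_k)$ unfolds as follows. For the forward direction, lifted witnesses $(w_j,i_j)$ whose tuple is in $R^\mathfrak{N}$ force $(w_0,\dots,w_k) \in R_1^\mathfrak{M}$ by observation~1, and the induction hypothesis converts $\mathfrak{N},(w_j,i_j) \Vdash \psi_j$ into $\mathfrak{M}, w_j \Vdash t(\psi_j)$, yielding $\mathfrak{M}, w_0 \Vdash \langle R_1 \rangle (t(\psi_1),\dots,t(\psi_k)) = t(\psi)$. Conversely, given witnesses $w_1,\dots,w_k$ for $t(\psi)$ in $\mathfrak{M}$, one chooses $i_j := 1-i_0$ so that the resulting lifted tuple is alternating and hence in $R^\mathfrak{N}$, and then invokes the IH in the other direction.

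The case $\psi = \langle \neg R \rangle(\psi_1,\dots,\psi_k)$ is entirely symmetric: use observation~2 and the uniform indices $i_j := i_0$, noting that $t(\langle \neg R \rangle \ldots) = \langle R_2 \rangle \ldots$. The main obstacle I anticipate is the bookkeeping for the three construction clauses, especially recognizing that $R_1^\mathfrak{M}$ and $R_2^\mathfrak{M}$ may overlap, so one must verify that the default clause has been tuned precisely so that the ``conversely'' directions of observations~1 and~2 hold under the covering property.
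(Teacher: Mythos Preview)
Your proposal is correct and is precisely the ``routine induction'' the paper gestures at, only spelled out in full: the key content is exactly the two observations you extract about how $R^{\mathfrak{N}}$ projects down to $R_1^{\mathfrak{M}}$ and $R_2^{\mathfrak{M}}$ via the alternating and uniform index patterns, together with the covering property from Lemma~\ref{lemma:negation_covering_property}. There is no methodological difference to flag.
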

\begin{proof}
    A routine induction.
\end{proof}

In particular $\mathfrak{N}$ is a model of $\varphi$, since $\mathfrak{N},(w,0) \Vdash \varphi$, and hence $\varphi$ is satisfiable. Thus we can conclude that $\varphi$ is satisfiable iff $\theta$ is.

\section{Satisfiability problem of $\PML_c(p,s,\neg,\cap)$}\label{sec:bounded_sat}

In this section we will reduce the satisfiability problem of $\PML_c(p,s,\neg,\cap)$ -- the restriction of $\PML(p,s,\neg,\cap)$ where the underlying vocabulary can contain at most $c$ accessibility relations, where $c$ is a fixed constant -- to that of $\PML + \E$, which will yield the following result.

\begin{theorem}\label{thm:finite_sat_complexity}
    The satisfiability problem of $\PML_c(p,s,\neg,\cap)$ is \textsc{ExpTime}-complete.
\end{theorem}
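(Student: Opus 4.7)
The strategy is to mimic the proof of Theorem \ref{thm:neg_sat_complexity}: give a polynomial-time reduction from $\PML_c(p,s,\neg,\cap)$ satisfiability to that of $\PML + \E$, which is in \textsc{ExpTime} by the earlier proposition. The \textsc{ExpTime} lower bound is immediate, since $\mathrm{ML}(\neg)\subseteq \PML_c(p,s,\neg,\cap)$ whenever $c\geq 1$, so Theorem \ref{thm:neg_sat_complexity} supplies it.

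First I would preprocess $\varphi$ using Lemma \ref{lemma:simple_equations}: push every relational negation inward so that complements apply only to permuted symbols $\sigma R$, and then distribute union out of each diamond via the identity $\langle\mathcal{R}_1\cup\mathcal{R}_2\rangle(\vec\psi)\equiv\langle\mathcal{R}_1\rangle(\vec\psi)\lor\langle\mathcal{R}_2\rangle(\vec\psi)$. After these polynomial transformations, every relational term appearing inside a diamond is an intersection of signed permuted literals, which can be identified with a (partial) table over the $k$-ary fragment of $\tau$. For each such subterm $\mathcal{R}$ I would introduce a fresh relation symbol $S_\mathcal{R}$ of arity $ar(\mathcal{R})$ and define $\varphi^*$ by replacing each $\langle\mathcal{R}\rangle(\vec\psi)$ with $\langle S_\mathcal{R}\rangle(\vec\psi)$. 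The bound $|\tau|\leq c$ is crucial here: it limits the pool of distinct literals generated by $\varphi$ under the (syntactically appearing) permutation closure, so the number of fresh symbols is polynomial in $|\varphi|$.

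The heart of the reduction is then a global axiom $\eta$, stated with $\E$ and $\A$, enforcing on witnessed tuples three forms of coherence among the $S_\mathcal{R}$'s: (a) \emph{complementary covering}, directly mirroring the $\eta$ of Section \ref{sec:negation_sat}, so that for each literal the positive and negative fresh relations jointly exhaust $W^{ar(\mathcal{R})}$; (b) \emph{permutation coherence}, coordinating a diamond $\langle S_{\sigma\mathcal{R}}\rangle(\psi_1,\dots,\psi_k)$ with $\langle S_\mathcal{R}\rangle(\psi_{\sigma^{-1}(1)},\dots,\psi_{\sigma^{-1}(k)})$ through $\E$- and $\A$-wrapped implications; and (c) \emph{intersection coherence}, asserting that any tuple simultaneously witnessed as belonging to $S_{\mathcal{R}_1}$ and $S_{\mathcal{R}_2}$ with pointwise-compatible witness-vectors is also witnessed as belonging to $S_{\mathcal{R}_1\cap\mathcal{R}_2}$. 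Each clause generalises the template of the $\eta$ from Section \ref{sec:negation_sat} by quantifying only over subformulas of $\varphi^*$, so the total size of $\eta$ remains polynomial.

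The main obstacle will be clause (c): $\PML+\E$ has no direct way to assert that \emph{the same tuple} witnesses two different diamonds, so this has to be simulated by a contrapositive-style global implication analogous to the $\eta$ of Section \ref{sec:negation_sat}, but now with three participating diamonds coordinated through their witness-vectors (roughly: if $\langle S_{\mathcal{R}_1\cap\mathcal{R}_2}\rangle(\psi_1\land\chi_1,\dots,\psi_k\land\chi_k)$ fails somewhere while $\langle S_{\mathcal{R}_1}\rangle(\vec\psi)$ and $\langle S_{\mathcal{R}_2}\rangle(\vec\chi)$ succeed, then there must be a position $\ell$ with $\A(\neg\psi_\ell\lor\neg\chi_\ell)$). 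Once $\eta$ is in hand, correctness proceeds in two directions. For $(\Rightarrow)$, interpret each $S_\mathcal{R}$ canonically as $\llbracket\mathcal{R}\rrbracket_\mathfrak{M}$ and check the axioms hold. For $(\Leftarrow)$, starting from a model of $\varphi^*\land\eta$, I would first mimic the extension step of Lemma \ref{lemma:negation_covering_property} by adding undetermined tuples to the $S_\mathcal{R}$'s in an $\eta$-consistent way, then use a product-with-$\{0,1\}$ construction in the style of Lemma \ref{lemma:negation_final_model_construction} to disentangle the extended choices, and finally define $R^\mathfrak{N}$ for each $R\in\tau$ from the $S_\mathcal{R}$'s corresponding to literals involving $R$, thereby recovering a Kripke model over $\tau$ satisfying $\varphi$.
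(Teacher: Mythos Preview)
Your overall architecture is right, but clause~(c) is where the proposal breaks down, and the fix is exactly what distinguishes the paper's argument from a straightforward generalisation of Section~\ref{sec:negation_sat}. The contrapositive you sketch for intersection coherence is \emph{not} sound in the canonical interpretation $S_{\mathcal R}:=\llbracket\mathcal R\rrbracket_{\mathfrak M}$: take binary $R_1,R_2$ with $R_1^{\mathfrak M}=\{(w,a)\}$, $R_2^{\mathfrak M}=\{(w,b)\}$, a proposition $p$ true at $a$ and $b$, and $\psi_1=\chi_1=p$. Then $\langle R_1\rangle p$ and $\langle R_2\rangle p$ hold at $w$, $\langle R_1\cap R_2\rangle(p\land p)$ fails everywhere, yet $\A(\neg p\lor\neg p)$ is false. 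So the $(\Rightarrow)$ direction of your equisatisfiability claim already fails. More generally, $\PML+\E$ simply cannot express ``the \emph{same} tuple lies in both $S_{\mathcal R_1}$ and $S_{\mathcal R_2}$'', and no contrapositive over subformulas repairs this: the witnessing tuples for two diamonds at the same world are unrelated.

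The paper avoids the problem entirely rather than solving it. Using Lemma~\ref{lemma:accessability_to_table}, it first rewrites $\varphi$ so that every diamond carries a \emph{full} table $\rho$ (a maximally consistent conjunction of literals). Because $|\tau|\le c$, there are only constantly many tables of each relevant arity, so this rewriting is polynomial. Full tables partition the tuple space, which kills the intersection issue outright: one only needs a covering axiom $\xi_1$ (every tuple lies in \emph{some} $\rho$) and permutation-closure axioms $\xi_2,\xi_3$, with $\xi_2$ handling the delicate case $\sigma(0)\neq 0$ via a nested box that your clause~(b) does not address. Finally, the $\{0,1\}$ product you propose is too coarse to disentangle permutation constraints; the paper uses $W\times\{2,\dots,m\}\times\mathbb N$, with the second coordinate recording a tuple position and the third a table index, precisely so that distinct witnessing tuples cannot collide under permutation.
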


The reduction used in the proof of Theorem \ref{thm:finite_sat_complexity} is very similar to the reduction that was used in the previous section to prove Theorem \ref{thm:neg_sat_complexity}. The reader is encouraged to keep this in mind when parsing the reduction, since even though the underlying ideas are again elementary, the resulting reduction is quite technical.

An important property of $\PML(\neg)$ is that it can not speak about intersections of accessibility  relations. In the case of $\PML(p,s,\neg,\cap)$ this is obviously no longer the case, but it is still possible to convert each sentence of $\PML(p,s,\neg,\cap)$ into an equi-satisfiable sentence with an analogous property. If the number of underlying accessibility relations is bounded by some constant, then this translation can also be carried out in polynomial time.

\begin{lemma}\label{lemma:accessability_to_table}
    Let $\varphi \in \PML_c(p,s,\neg,\cap)[\tau]$ be a formula. Then we can transform $\varphi$ in polynomial time to a formula $\varphi^* \in \PML_c(p,s,\neg,\cap)[\tau]$, which has the following properties.
    \begin{enumerate}
        \item $\varphi$ is satisfiable if and only if $\varphi^*$ is.
        \item For every $\langle \mathcal{R} \rangle (\psi_1,\dots,\psi_k) \in \subf(\varphi^*)$ the term $\mathcal{R}$ is a $k$-table over $\tau$.
    \end{enumerate}
\end{lemma}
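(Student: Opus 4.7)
The plan is to establish a semantic equivalence through which diamonds with arbitrary terms can be re-expressed as disjunctions of diamonds with table terms. Concretely, for every $k$-ary term $\mathcal{R} \in \gra(p,s,\neg,\cap)[\tau]$ and every Kripke model $\mathfrak{M}$ over $\tau$, the interpretation $\llbracket \mathcal{R} \rrbracket_\mathfrak{M}$ equals the disjoint union $\bigcup_{\rho \models \mathcal{R}} \llbracket \rho \rrbracket_\mathfrak{M}$, where $\rho$ ranges over $k$-tables over $\tau$ that Boolean-entail $\mathcal{R}$. This follows from the observation, already recorded in the preliminaries, that every tuple in $W^k$ realizes a unique $k$-table, so tables partition the tuple-space and any term (which is a Boolean combination of literals, modulo the identities of Lemma~\ref{lemma:simple_equations}) has its interpretation given by a union of cells of this partition.

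From this semantic fact, I would derive the logical equivalence
\[\langle \mathcal{R} \rangle (\psi_1,\dots,\psi_k) \equiv \bigvee_{\rho \models \mathcal{R}} \langle \rho \rangle (\psi_1,\dots,\psi_k),\]
and use it as the engine of the transformation. I would define $\varphi^*$ by induction on $\varphi$, processed bottom-up: propositional atoms, negations, and conjunctions are handled homomorphically, while each diamond $\langle \mathcal{R} \rangle(\psi_1,\dots,\psi_k)$ is replaced by the above disjunction, with each $\psi_i$ already replaced by its transform $\psi_i^*$. Logical (and hence equi-) satisfiability of $\varphi$ and $\varphi^*$ then follows by a routine structural induction, giving property (1); property (2) holds directly by construction.

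For the complexity, since $|\tau| \le c$ is a constant and since for $\PML_c$ we may regard the arities of symbols in $\tau$ as fixed, the number of $k$-tables over $\tau$ for any relevant arity $k$ is a constant, and each table, written as $\bigcap_{\alpha \in \rho} \alpha$, has constant size as a term. Thus each rewriting step introduces a disjunction of a bounded number of bounded-size table diamonds. Representing the output so as to share repeated subformulas, the number of distinct subformulas of $\varphi^*$ stays polynomial in the number of distinct subformulas of $\varphi$, and the transformation is computable in polynomial time.

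The main obstacle I anticipate is the entailment check $\rho \models \mathcal{R}$: the literals of $\rho$ are in the canonical form $\sigma R$ or $\neg \sigma R$, while $\mathcal{R}$ may contain deeply nested compositions of $p$, $s$, $\neg$, and $\cap$. To evaluate $\mathcal{R}$ under $\rho$ one has to normalize $\mathcal{R}$ using Lemma~\ref{lemma:simple_equations} (pushing negations past permutations, composing stacked $p, s$ operators, and distributing $\neg$ over $\cap$), rewriting each leaf into canonical literal form so that it can be matched against $\rho$. This bookkeeping, though elementary, is the one nontrivial piece of work in the proof.
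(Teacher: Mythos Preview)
Your approach is correct but takes a genuinely different route from the paper's. You replace each diamond $\langle \mathcal{R} \rangle(\psi_1,\dots,\psi_k)$ directly by the disjunction $\bigvee_{\rho \models \mathcal{R}} \langle \rho \rangle(\psi_1^*,\dots,\psi_k^*)$, which makes $\varphi^*$ logically \emph{equivalent} to $\varphi$. The paper instead introduces fresh propositional symbols $p_{\psi_1},\dots,p_{\psi_k}$, replaces the diamond by $\bigvee_{\rho \models \mathcal{R}} \langle \rho \rangle(p_{\psi_1},\dots,p_{\psi_k})$, and conjoins definitional axioms $\bigwedge_{\rho'}[\rho'](p_{\psi_\ell} \leftrightarrow \psi_\ell)$ at the top level. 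This Tseitin-style renaming yields only an \emph{equisatisfiable} formula, but it keeps $\varphi^*$ polynomial-size as an ordinary syntax tree: at each step the diamond is replaced by something of constant size, and the copied $\psi_\ell$ are pulled out to a single top-level conjunct rather than duplicated inside the disjunction. Your direct replacement, by contrast, copies each $\psi_i^*$ once per entailing table, and iterated through the modal nesting depth this produces a tree of size $C^{\Theta(|\varphi|)}$ for $C$ the (constant) number of tables; you recover polynomial size only via the shared-subformula (DAG) representation you invoke. That device is legitimate and suffices for the downstream reduction, but it is where the real difficulty in your version lies --- not the entailment check $\rho \models \mathcal{R}$, which is just a Boolean evaluation once $\rho$ fixes truth values for all literals and so is entirely routine. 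In short: your argument is sound, trades the renaming trick for subformula sharing, and buys logical equivalence over mere equisatisfiability; the paper's version avoids the non-standard representation at the cost of introducing fresh propositional symbols.
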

\begin{proof}
    By applying repeatedly Lemma \ref{lemma:simple_equations}, we can assume that for every $\langle \mathcal{R} \rangle (\psi_1,\dots,\psi_k) \in \subf(\varphi)$ we have that $\mathcal{R}$ is a boolean combination of $(k+1)$-literals over $\tau$. Now we pick an innermost such subformula of $\varphi$. The term $\mathcal{R}$ is clearly equivalent with the term
    \[\bigcup_{\rho \models \mathcal{R}} \ \bigcap_{\alpha \in \rho} \ \alpha,\]
    where each $\rho$ is a $(k+1)$-table over $\tau$. Let $p_{\psi_1},\dots,p_{\psi_k}$ denote fresh propositional symbols. In $\varphi$ we replace $\langle R \rangle (\psi_1,\dots,\psi_k)$ with the following formula
    \[\bigvee_{\rho \models \mathcal{R}} \langle \bigcap_{\alpha \in \rho} \alpha \rangle (p_{\psi_1},\dots,p_{\psi_k}).\]
    Let $\varphi'$ denote the resulting formula. Without loss of generality we will assume that $\tau$ contains at least one binary relation symbol. With this technical assumption it is clear that $\varphi$ is equisatisfiable with the formula
    \[\varphi' \land \bigwedge_{2\text{-table } \rho} \ [ \bigcap_{\alpha \in \rho} \alpha](p_{\psi_\ell} \leftrightarrow \psi_\ell).\]
    Since the size of $\tau$ is bounded by a constant, the size of the above formula is polynomial with respect to the size of the original formula $\varphi$. (Indeed, the fact that $|\tau|$ is bounded by a constant entails that the number of tables over $\tau$ is also bounded by a constant.) By repeating the above procedure sufficiently many times, we will eventually reach the desired formula.
\end{proof}

\begin{remark}
    In Lemma \ref{lemma:accessability_to_table} it is not enough to assume that there is a constant bound on the \emph{arities} of relations in $\tau$ to guarantee that the translation is efficient, i.e., polynomial time computable, since even the number of $2$-tables over a binary vocabulary $\tau$ is bounded from below by $2^{|\tau|}$.
\end{remark}

For the rest of this section we will assume that $\varphi$ satisfies property (ii) of Lemma \ref{lemma:accessability_to_table}. Now, for $2\leq k\leq m$, where $m = \max \{ar(R) \mid R\in \tau\}$, let $\mathfrak{T}_k$ denote the set of all tables over $\tau$. Each $k$-table $\rho \in \mathfrak{F}_k$ can be seen as a $k$-ary accessibility  relation and hence we will consider the vocabulary $\tau_\mathfrak{T} := \bigcup_{2\leq k\leq m} \mathfrak{T}_k$. We let $t(\varphi)$ denote the sentence in $\PML_c[\tau_\mathfrak{T}]$ which is obtained from $\varphi$ by replacing each table $\rho \in \gra(p,s,\neg,\cap)[\tau]$ with the corresponding relation symbol $\rho \in \tau_\mathfrak{T}$.

We next describe sentences of $\PML[\tau_\mathfrak{T}] + \E$ which together play the same role that $\eta$ did in the previous section. We start with the following sentence, which we denoted by $\xi_1$.
\[\xi_1 := \A \ \bigwedge_{1\leq k < m} \ \bigwedge_{\substack{g:\mathfrak{T}_{k+1} \to S_{k+1} \\ \sigma_\rho := g(\rho)}} \ \bigwedge_{\substack{h:\mathfrak{T}_{k+1} \to (\subf(t(\varphi)))^k \\ h(\rho) = (\psi_1^{\sigma_\rho},\dots,\psi_k^{\sigma_\rho})}} \Bigg( \bigg ( \bigwedge_{\rho \in X_0} \neg \langle \sigma_\rho[\rho] \rangle (\psi_1^{\sigma_\rho},\dots,\psi_k^{\sigma_\rho})\]
\[\land \bigwedge_{1\leq \ell \leq k} \E \bigg(\bigwedge_{\rho \in X_\ell} \neg \langle \sigma_\rho [\rho] \rangle (\psi_1^{\sigma_\rho}, \dots, \psi_k^{\sigma_\rho}) \land \bigwedge_{\rho \not\in X_\ell} \psi_{\sigma_\rho^{-1}(\ell)}^{\sigma_\rho} \bigg) \bigg) \to \bigvee_{\rho \not\in X_0} \neg \psi_{\sigma_\rho^{-1}(0)}^{\sigma_\rho} \Bigg)\]
In the above sentence we use $X_\ell$ to denote the set $\{\rho \mid \sigma_\rho(0) = \ell\}$. Note that since the size of $\tau$ is bounded by a constant, $\xi_1$ is only of size at most polynomial with respect to the size of $\varphi$. In addition to $\xi_1$, we will need the following sentences, which will be denoted by $\xi_2$ and $\xi_3$ respectively.
\[\xi_2 := \A \ \bigwedge_{1\leq k < m} \ \bigwedge_{\rho \in \mathfrak{T}_{k+1}} \ \bigwedge_{\substack{\sigma \in S_{k+1} \\ \sigma(0) \neq 0}} \ \bigwedge_{\psi_1,\dots,\psi_k \in \subf(t(\varphi))}\]
\[\bigg(\neg \psi_{\sigma^{-1}(0)} \lor \neg \langle \rho \rangle (\neg \psi_{\sigma^{-1}(1)},\dots,\underbrace{\langle \sigma[\rho] \rangle(\psi_1,\dots,\psi_k)}_{\sigma(0)\text{:th formula}},\dots,\neg \psi_{\sigma^{-1}(k)}) \bigg)\]
\[\xi_3 := \A \ \bigwedge_{1\leq k < m} \ \bigwedge_{\rho \in \mathfrak{T}_{k+1}} \ \bigwedge_{\substack{\sigma \in S_{k+1} \\ \sigma(0) = 0}} \ \bigwedge_{\psi_1,\dots,\psi_k \in \subf(t(\varphi))}\]
\[\bigg(\langle \rho \rangle (\psi_{\sigma^{-1}(1)},\dots,\psi_{\sigma^{-1}(k)}) \to \langle \sigma[\rho] \rangle (\psi_1,\dots,\psi_k) \bigg)\]
We let $\Theta := t(\varphi) \land \xi_2 \land \xi_2 \land \xi_3$. The sentences $\xi_1,\xi_2,\xi_3$ might look rather complicated, but we emphasize again that they are simply playing essentially the same role that $\eta$ did in the previous section. Namely, they axiomatize enough properties of tables so that in any model of $\Theta$ we can enlarge  the interpretations of the accessibility relations $\rho$ in such a way that they cover all the tuples of relevant length, while maintaining that the resulting model is still a model of $t(\varphi)$.

The rest of this section is devoted towards proving that $\varphi$ is satisfiable iff $\Theta$. We start with the easy direction.

\begin{lemma}\label{lemma:pmlm_easy_lemma}
    If $\varphi$ is satisfiable, then so is $\Theta$. 
\end{lemma}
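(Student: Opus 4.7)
The plan is to construct the required model of $\Theta$ canonically: given $\mathfrak{M}, w \Vdash \varphi$, define $\mathfrak{N}$ over $\tau_\mathfrak{T}$ on the same domain and with the same valuation of propositional symbols, interpreting each relation symbol $\rho \in \tau_\mathfrak{T}$ as the set of tuples in $\mathfrak{M}$ that realize the table $\rho$. Because tables are maximally consistent, the relations $\rho^\mathfrak{N}$ of a given arity partition the corresponding cartesian power of the domain, and by construction a tuple lies in $\rho^\mathfrak{N}$ exactly when it realizes $\rho$ in the original model.

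First I would verify $\mathfrak{N}, w \Vdash t(\varphi)$ by a routine induction on subformulas, based on the equality $\llbracket \rho \rrbracket_\mathfrak{M} = \rho^\mathfrak{N}$ for every table $\rho$ appearing in $\varphi$ (using that $\varphi$ satisfies property (ii) of Lemma \ref{lemma:accessability_to_table}). Next, $\xi_2$ and $\xi_3$ both follow from the single observation that whenever a tuple $(a_0, \ldots, a_k)$ realizes $\rho$, the appropriately permuted tuple realizes $\sigma[\rho]$; this is exactly the algebraic content of the definition of $\sigma[\rho]$, and makes both $\xi_2$ and $\xi_3$ straightforward rewritings of the semantic clause for $\langle - \rangle$ under a permutation of arguments.

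The main obstacle is $\xi_1$. Fix $w_0$ and suppose its antecedent holds; use the existential Assumption 2 to pick witnesses $w_1, \ldots, w_k$. The key move is to look at the unique table $\rho^*$ realized by the tuple $(w_0, w_1, \ldots, w_k)$ in $\mathfrak{N}$, and case-split on $j := \sigma_{\rho^*}(0)$. If $j = 0$, so $\rho^* \in X_0$, then for each $\ell \in \{1, \ldots, k\}$ the relevant conjunct of Assumption 2 applied to $\rho^* \notin X_\ell$ yields $w_\ell \Vdash \psi^{\sigma_{\rho^*}}_{\sigma_{\rho^*}^{-1}(\ell)}$; the suitably permuted tuple then witnesses $\mathfrak{N}, w_0 \Vdash \langle \sigma_{\rho^*}[\rho^*] \rangle (\psi^{\sigma_{\rho^*}}_1, \ldots, \psi^{\sigma_{\rho^*}}_k)$, contradicting Assumption 1. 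If $j \neq 0$, then $\rho^* \notin X_0$ is the intended disjunct of the conclusion, so it suffices to show $\mathfrak{N}, w_0 \not\Vdash \psi^{\sigma_{\rho^*}}_{\sigma_{\rho^*}^{-1}(0)}$; otherwise, combining this with the conjuncts of Assumption 2 at $\ell = j$ read off for all $\rho \notin X_\ell$ would give $\mathfrak{N}, w_j \Vdash \langle \sigma_{\rho^*}[\rho^*] \rangle(\psi^{\sigma_{\rho^*}}_1, \ldots, \psi^{\sigma_{\rho^*}}_k)$, contradicting the conjunct of Assumption 2 at $\ell = j$ corresponding to $\rho^* \in X_j$.

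The conceptual hurdle is seeing that the syntactically intricate $\xi_1$ is exactly the sentence needed to carry out this case analysis on the table of an arbitrary tuple through $w_0$; once that is recognized, the argument reduces to careful but mechanical permutation-index bookkeeping. The only delicate technical point will be keeping straight the convention relating $\sigma \in S_{k+1}$ to the tuple permutation that carries a realizer of $\rho$ to a realizer of $\sigma[\rho]$.
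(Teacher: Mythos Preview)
Your proposal is correct and follows essentially the same route as the paper: define $\mathfrak{N}$ by setting $\rho^\mathfrak{N} = \llbracket \rho \rrbracket_\mathfrak{M}$, note $t(\varphi)$ transfers, and verify $\xi_1,\xi_2,\xi_3$ by exploiting that every tuple realizes a unique table and that permuting a realizer of $\rho$ yields a realizer of $\sigma[\rho]$; the paper merely packages the $\xi_1$ argument as a single global contradiction rather than your explicit case split on $j=\sigma_{\rho^*}(0)$. One small slip in your $j\neq 0$ case: the formulas $\psi^{\sigma_{\rho^*}}_{\sigma_{\rho^*}^{-1}(m)}$ at the other coordinates $m\in\{1,\dots,k\}\setminus\{j\}$ come from Assumption~2 evaluated at those $\ell=m$ (using $\rho^*\notin X_m$), not from Assumption~2 at $\ell=j$ as you wrote; with that correction the permuted tuple indeed witnesses $\mathfrak{N},w_j\Vdash\langle\sigma_{\rho^*}[\rho^*]\rangle(\psi^{\sigma_{\rho^*}}_1,\dots,\psi^{\sigma_{\rho^*}}_k)$ and contradicts the $\rho^*\in X_j$ conjunct there.
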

\begin{proof}
    Let $\mathfrak{M} = (W,(R^\mathfrak{M})_{R\in \tau},V)$ be a Kripke model and let $w\in W$ be a world so that $\mathfrak{M},w \Vdash \varphi$. We then define the Kripke model $\mathfrak{N} = (W,(\rho^\mathfrak{N})_{\rho \in \tau_\mathfrak{T}}, V)$ by setting that for every $\rho\in \tau_\mathfrak{T}$, $\rho^\mathfrak{N} = \llbracket \rho \rrbracket_\mathfrak{M}$. Clearly $\mathfrak{N},w \Vdash t(\varphi)$.
    
    Let us then verify that $\mathfrak{N},w_0 \Vdash \xi_1$, for any $w_0 \in W$. Fix $k, g$ and $h$. Suppose that
    \[\mathfrak{N},w_0 \Vdash \bigwedge_{\rho \in X_0} \neg \langle \sigma_\rho[\rho] \rangle (\psi_1^{\sigma_\rho},\dots,\psi_k^{\sigma_\rho})\]
    and that for every $1\leq \ell \leq k$ there exists $w_\ell$ so that
    \[\mathfrak{N},w_\ell \Vdash \bigwedge_{\rho \in X_\ell} \neg \langle \sigma_\rho [\rho] \rangle (\psi_1^{\sigma_\rho}, \dots, \psi_k^{\sigma_\rho}) \land \bigwedge_{\rho \not\in X_\ell} \psi_{\sigma_\rho^{-1}(\ell)}^{\sigma_\rho}.\]
    Our goal is to show that
    \[\mathfrak{N},w_0 \Vdash \bigvee_{\rho \not\in X_0} \neg \psi_{\sigma_\rho^{-1}(0)}^{\sigma_\rho}.\]
    Aiming for a contradiction, suppose that this is not the case. Since every tuple realizes a table in $\mathfrak{M}$, there exists $\rho \in \tau_\mathfrak{T}$ so that $(w_0,w_1,\dots,w_k) \in \rho^\mathfrak{N}$. Recall that the function $g$ associates a permutation $\sigma_\rho$ with $\rho$. Now $(w_{\sigma_\rho(0)},\dots,w_{\sigma_\rho(k)}) \in \llbracket \sigma_\rho \rho \rrbracket_\mathfrak{M}$. Let $\ell_0 := \sigma_\rho(0)$. Then, for every $\ell \neq \ell_0$ we have that
    \[\mathfrak{N},w_\ell \Vdash \psi_{\sigma_\rho^{-1}(\ell)}^{\sigma_\rho},\]
    since $\rho \not\in X_\ell$. This implies that
    \[\mathfrak{N},w_{\sigma_\rho(\ell)} \Vdash \psi_\ell^{\sigma_\rho},\]
    for every $1\leq \ell \leq k$. On the other hand
    \[\mathfrak{N},w_{\sigma_\rho(0)} \Vdash \neg \langle \sigma_\rho[\rho] \rangle (\psi_1^{\sigma_\rho},\dots,\psi_k^{\sigma_\rho}),\]
    which is a contradiction, since $\sigma_\rho[\rho]^\mathfrak{N} = \llbracket \sigma_\rho \rho \rrbracket_\mathfrak{M}$.
    
    Next, we will verify that $\mathfrak{N},w_0 \Vdash \xi_2$, for any $w_0 \in W$. Fix $k, \sigma$ and $\psi_1,\dots,\psi_k \in \subf(t(\varphi))$. Aiming for a contradiction, suppose that
    \[\mathfrak{N},w_0 \Vdash \bigg(\psi_{\sigma^{-1}(0)} \land \langle \rho \rangle ( \psi_{\sigma^{-1}(1)},\dots,\neg \langle \sigma[\rho](\psi_1,\dots,\psi_k),\dots, \psi_{\sigma^{-1}(k)}) \bigg)\]
    Thus there exists $(w_0,w_1,\dots,w_k) \in \rho^\mathfrak{N}$ so that $\mathfrak{M},w_\ell \Vdash \psi_{\sigma^{-1}(\ell)}$, for every $\ell \neq \sigma(0)$ (including $\ell = 0$), and
    \[\mathfrak{M},w_{\sigma(0)} \Vdash \neg \langle \sigma[\rho] \rangle (\psi_1,\dots,\psi_k).\]
    Now $(w_{\sigma(0)},\dots,w_{\sigma(k)}) \in (\sigma[\rho])^\mathfrak{M}$ and furthermore $\mathfrak{M},w_{\sigma(\ell)} \Vdash \psi_\ell$, for every $1\leq \ell \leq k$, which is a contradiction.
    
    Finally, we will verify that $\mathfrak{N},w_0 \Vdash \xi_3$, for any $w_0 \in W$. Fix $k,\sigma$ and $\psi_1,\dots,\psi_k \in \subf(t(\varphi))$. Aiming for a contradiction, suppose that
    \[\mathfrak{M},w_0 \Vdash \langle \rho \rangle (\psi_{\sigma^{-1}(1)},\dots,\psi_{\sigma^{-1}(k)}),\]
    but
    \[\mathfrak{M},w_0 \Vdash \neg \langle \sigma [\rho] \rangle (\psi_1,\dots,\psi_k).\]
    Now there exists $(w_0,w_1,\dots,w_k) \in \rho^\mathfrak{N}$ so that $\mathfrak{M},w_\ell \Vdash \psi_{\sigma^{-1}(\ell)}$. Hence $(w_0,w_{\sigma(1)},\dots,w_{\sigma(k)}) \in \llbracket \sigma \rho \rrbracket_\mathfrak{M} = (\sigma[\rho])^\mathfrak{N}$, which is a contradiction, since $\mathfrak{M},w_{\sigma(\ell)} \Vdash \psi_\ell$, for every $1\leq \ell \leq k$.
\end{proof}

Suppose now that $\mathfrak{M}$ is a Kripke model and that there is a $w\in W$ so that $\mathfrak{M},w \Vdash \Theta$. Our goal is to use $\mathfrak{M}$ to construct a model for $\varphi$. First, we will need a lemma which guarantees that we can extend the interpretations of $k$-ary accessibility relations in such a way that every tuple will belong to the interpretation of at least one such relation.

\begin{lemma}\label{lemma:pmlm_main_lemma}
    For every $1\leq k < m$ and $(w_0,w_1,\dots,w_k) \in W^{k+1}$ there exists $\rho \in \mathfrak{T}_{k+1}$ so that for all $\sigma \in S_{k+1}$ and $\psi_1,\dots,\psi_k \in \subf(t(\varphi))$ we have that if $\mathfrak{M},w_{\sigma(\ell)} \Vdash \psi_\ell$, for every $1\leq \ell \leq k$, then $\mathfrak{M},w_{\sigma(0)} \Vdash \langle \sigma[\rho] \rangle (\psi_1,\dots,\psi_k)$.
\end{lemma}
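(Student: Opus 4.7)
The plan is to mimic the argument of Lemma \ref{lemma:negation_covering_property} but using $\xi_1$ in place of $\eta$. Assume for contradiction that no such $\rho$ exists. Then for every $\rho \in \mathfrak{T}_{k+1}$ we can choose a permutation $\sigma_\rho \in S_{k+1}$ and formulas $\psi_1^{\sigma_\rho}, \dots, \psi_k^{\sigma_\rho} \in \subf(t(\varphi))$ witnessing the failure, i.e.\ such that $\mathfrak{M}, w_{\sigma_\rho(\ell)} \Vdash \psi_\ell^{\sigma_\rho}$ for every $1\leq \ell\leq k$ while $\mathfrak{M}, w_{\sigma_\rho(0)} \not\Vdash \langle \sigma_\rho[\rho] \rangle (\psi_1^{\sigma_\rho}, \dots, \psi_k^{\sigma_\rho})$. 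These choices define exactly the functions $g$ and $h$ over which $\xi_1$ ranges, so $\xi_1$ (being prefixed by $\A$, hence true at every world, in particular at $w_0$) yields the corresponding instance with the sets $X_\ell = \{\rho \mid \sigma_\rho(0) = \ell\}$.

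The core of the argument is to verify that the antecedent of that instance holds at $w_0$, with each $w_\ell$ ($1\leq \ell\leq k$) serving as the existential witness for the $\E$-conjunct indexed by $\ell$. For $\rho \in X_0$ we have $w_{\sigma_\rho(0)} = w_0$, so by choice of $\sigma_\rho$ the conjunct $\neg \langle \sigma_\rho[\rho] \rangle (\psi_1^{\sigma_\rho}, \dots, \psi_k^{\sigma_\rho})$ holds at $w_0$. Similarly, for $\rho \in X_\ell$ with $\ell \geq 1$ the same conjunct holds at $w_\ell$. For $\rho \notin X_\ell$, i.e.\ $\sigma_\rho(0) \neq \ell$, we have $\sigma_\rho^{-1}(\ell) \in \{1,\dots,k\}$, so from $\mathfrak{M}, w_{\sigma_\rho(\ell')} \Vdash \psi_{\ell'}^{\sigma_\rho}$ with $\ell' = \sigma_\rho^{-1}(\ell)$ we obtain $\mathfrak{M}, w_\ell \Vdash \psi_{\sigma_\rho^{-1}(\ell)}^{\sigma_\rho}$, exactly as required.

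Therefore the consequent $\bigvee_{\rho \notin X_0} \neg \psi_{\sigma_\rho^{-1}(0)}^{\sigma_\rho}$ of this instance of $\xi_1$ must hold at $w_0$. But for $\rho \notin X_0$ we have $\sigma_\rho(0) \neq 0$, so $\sigma_\rho^{-1}(0) \in \{1,\dots,k\}$; taking $\ell' := \sigma_\rho^{-1}(0)$, the defining property $\mathfrak{M}, w_{\sigma_\rho(\ell')} \Vdash \psi_{\ell'}^{\sigma_\rho}$ gives $\mathfrak{M}, w_0 \Vdash \psi_{\sigma_\rho^{-1}(0)}^{\sigma_\rho}$. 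This holds for every $\rho \notin X_0$, contradicting the disjunction above and completing the proof.

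The only delicate point, and the step I expect to demand the most care, is the bookkeeping around $\sigma_\rho^{-1}$: one must be consistent about whether $\psi_\ell^{\sigma_\rho}$ is indexed by the ``input coordinate'' of $\sigma_\rho$ or by the position in the tuple $(w_0,\dots,w_k)$, and match this precisely with the way $X_\ell$ and the $\psi_{\sigma_\rho^{-1}(\ell)}^{\sigma_\rho}$ appear in $\xi_1$. Everything else is a direct transcription of the $\mathrm{PML}(\neg)$-argument to the polyadic Boolean setting.
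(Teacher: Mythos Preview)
Your proposal is correct and follows essentially the same route as the paper's own proof: assume no suitable $\rho$ exists, collect the witnessing $\sigma_\rho$ and $\psi_\ell^{\sigma_\rho}$ into the functions $g,h$ over which $\xi_1$ quantifies, verify the antecedent at $w_0$ with the $w_\ell$ as $\E$-witnesses, and derive a contradiction from the consequent. In fact you spell out the ``simple to verify'' step (the index chase showing $\mathfrak{M},w_\ell \Vdash \psi_{\sigma_\rho^{-1}(\ell)}^{\sigma_\rho}$ for $\rho \notin X_\ell$) more explicitly than the paper does, and your caution about the $\sigma_\rho^{-1}$ bookkeeping is well placed but already handled correctly.
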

\begin{proof}
    Fix $(w_0,w_1,\dots,w_k) \in W^{k+1}$. Aiming for a contradiction, suppose that for every $\rho \in \mathfrak{T}_{k+1}$ there exists $\sigma_\rho \in S_{k+1}$ and $\psi_1^{\sigma_\rho},\dots,\psi_k^{\sigma_\rho} \in \subf(t(\varphi))$ so that
    \[\mathfrak{M},w_{\sigma_\rho(\ell)} \Vdash \psi_\ell^{\sigma_\rho},\]
    for every $1\leq \ell \leq k$, but
    \[\mathfrak{M},w_{\sigma_\rho(0)} \Vdash \neg \langle \sigma_\rho[\rho] \rangle (\psi_1^{\sigma_\rho},\dots,\psi_k^{\sigma_\rho}).\]
    It is simple to verify that this entails that
    \[\mathfrak{M},w_0 \Vdash \bigwedge_{\rho \in X_0} \neg \langle \sigma_\rho[\rho] \rangle (\psi_1^{\sigma_\rho},\dots,\psi_k^{\sigma_\rho}) \]
    \[\land \bigwedge_{1\leq \ell \leq k} \E \bigg(\bigwedge_{\rho \in X_\ell} \neg \langle \sigma_\rho [\rho] \rangle (\psi_1^{\sigma_\rho}, \dots, \psi_k^{\sigma_\rho}) \land \bigwedge_{\rho \not\in X_\ell} \psi_{\sigma_\rho^{-1}(\ell)}^{\sigma_\rho} \bigg)\]
    and since $\mathfrak{M},w_0 \Vdash \xi_1$, we have that
    \[\mathfrak{M},w_0 \Vdash \bigvee_{\rho \not\in X_0} \neg \psi_{\sigma_\rho^{-1}(0)}^{\sigma_\rho},\]
    which is a contradiction, since by assumption $\mathfrak{M},w_0 \Vdash \bigwedge_{\rho \not\in X_0} \psi_{\sigma_\rho^{-1}(0)}^{\sigma_\rho}$.
\end{proof}

Secondly, we will need a lemma which guarantees that we can close the interpretations of accessibility relations under permutations.

\begin{lemma}\label{lemma:pmlm_closed_under_permutations}
    For every $1\leq k < m$, $\rho \in \mathfrak{T}_{k+1}$, $\sigma \in S_{k+1}$ we have that if $(w_0,w_1,\dots,w_k) \in \rho^\mathfrak{M}$, then for every $\psi_1,\dots,\psi_k \in \subf(t(\varphi))$ we have that if $\mathfrak{M},w_{\sigma(\ell)} \Vdash \psi_\ell$, for every $1\leq \ell \leq k$, then $\mathfrak{M},w_{\sigma(0)} \Vdash \langle \sigma[\rho] \rangle (\psi_1,\dots,\psi_k)$.
\end{lemma}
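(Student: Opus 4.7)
The plan is to split the proof into two cases according to whether $\sigma(0)=0$, handling the first case directly with $\xi_3$ and the second by deriving a contradiction from $\xi_2$. Fix $k$, $\rho\in\mathfrak{T}_{k+1}$, $\sigma\in S_{k+1}$, a tuple $(w_0,\dots,w_k)\in\rho^\mathfrak{M}$, and formulas $\psi_1,\dots,\psi_k\in\subf(t(\varphi))$ satisfying $\mathfrak{M},w_{\sigma(\ell)}\Vdash\psi_\ell$ for every $1\leq\ell\leq k$. A convenient reformulation of this hypothesis, obtained by the substitution $\ell\mapsto\sigma^{-1}(\ell)$, is that $\mathfrak{M},w_\ell\Vdash\psi_{\sigma^{-1}(\ell)}$ for every $\ell$ in the image $\sigma(\{1,\dots,k\})=\{0,1,\dots,k\}\setminus\{\sigma(0)\}$.

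Consider first the case $\sigma(0)=0$. Then $\sigma$ restricts to a bijection of $\{1,\dots,k\}$, and the reformulation above supplies $\mathfrak{M},w_\ell\Vdash\psi_{\sigma^{-1}(\ell)}$ for every $1\leq\ell\leq k$. Together with $(w_0,w_1,\dots,w_k)\in\rho^\mathfrak{M}$, this witnesses $\mathfrak{M},w_0\Vdash\langle\rho\rangle(\psi_{\sigma^{-1}(1)},\dots,\psi_{\sigma^{-1}(k)})$. Because $\xi_3$ is prefixed by $\A$ and $\mathfrak{M},w\Vdash\Theta$, the relevant instance of $\xi_3$ is available at $w_0$, and it yields $\mathfrak{M},w_0\Vdash\langle\sigma[\rho]\rangle(\psi_1,\dots,\psi_k)$; since $w_{\sigma(0)}=w_0$ in this case, the conclusion follows.

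In the case $\sigma(0)=:j\neq 0$, I argue by contradiction, supposing $\mathfrak{M},w_j\Vdash\neg\langle\sigma[\rho]\rangle(\psi_1,\dots,\psi_k)$. The reformulation now yields $\mathfrak{M},w_\ell\Vdash\psi_{\sigma^{-1}(\ell)}$ for every $\ell\in\{0,1,\dots,k\}\setminus\{j\}$, and in particular $\mathfrak{M},w_0\Vdash\psi_{\sigma^{-1}(0)}$. Hence the tuple $(w_0,w_1,\dots,w_k)\in\rho^\mathfrak{M}$ witnesses $\mathfrak{M},w_0\Vdash\langle\rho\rangle(\chi_1,\dots,\chi_k)$, where $\chi_\ell=\psi_{\sigma^{-1}(\ell)}$ for $\ell\neq j$ and $\chi_j=\neg\langle\sigma[\rho]\rangle(\psi_1,\dots,\psi_k)$. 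Combined with $\mathfrak{M},w_0\Vdash\psi_{\sigma^{-1}(0)}$, this directly contradicts the instance of $\xi_2$ at $w_0$ corresponding to the parameters $(k,\rho,\sigma,\psi_1,\dots,\psi_k)$ (as in the verification of $\xi_2$ carried out inside the proof of Lemma \ref{lemma:pmlm_easy_lemma}).

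The main thing to watch is the bookkeeping around the permutation $\sigma$ and the distinguished position $\sigma(0)$; the sentences $\xi_2$ and $\xi_3$ have been engineered precisely to encode the required closure of tables under permutations, so no additional machinery is needed beyond observing that the $\A$-prefix makes each relevant instance available at the arbitrary world $w_0$. There is no induction and no appeal to the structure of $\varphi$; the lemma is a direct consequence of the two axiom schemas applied to the specific tuple under consideration.
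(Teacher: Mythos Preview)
Your proof is correct and follows essentially the same approach as the paper: both split on whether $\sigma(0)=0$, invoke $\xi_3$ at $w_0$ in the first case, and derive a contradiction from $\xi_2$ at $w_0$ in the second. The only cosmetic difference is that you handle the case $\sigma(0)=0$ directly rather than wrapping both cases in a single reductio, and you make the index substitution $\ell\mapsto\sigma^{-1}(\ell)$ explicit up front rather than writing $w_{\sigma(\sigma^{-1}(\ell))}$ inline as the paper does.
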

\begin{proof}
    Fix $k,\rho,\sigma$ and $(w_0,w_1,\dots,w_k) \in \rho^\mathfrak{M}$. Aiming for a contradiction, suppose that there exists $\psi_1,\dots,\psi_k \in \subf(t(\varphi))$ so that
    \[\mathfrak{M},w_{\sigma(\ell)} \Vdash \psi_\ell,\]
    for every $1\leq \ell \leq k$, but
    \[\mathfrak{M},w_{\sigma(0)} \Vdash \neg \langle \sigma[\rho] \rangle (\psi_1,\dots,\psi_k).\]
    We have now two cases based on whether or not $\sigma(0) = 0$. First, if $\sigma(0) = 0$, then since $\mathfrak{M},w_0 \Vdash \xi_3$, we have that
    \[\mathfrak{M},w_0 \Vdash \langle \rho \rangle (\psi_{\sigma^{-1}(1)}, \dots, \psi_{\sigma^{-1}(k)}) \to \langle \sigma [\rho] \rangle (\psi_1,\dots,\psi_k).\]
    By assumption, $(w_0,w_1,\dots,w_k) \in \rho^\mathfrak{M}$ and $\mathfrak{M},w_{\sigma(\sigma^{-1}(\ell))} \Vdash \psi_{\sigma^{-1}(\ell)}$, for every $1\leq \ell \leq k$, and hence
    \[\mathfrak{M},w_0 \Vdash \langle \rho \rangle (\psi_{\sigma^{-1}(1)},\dots,\psi_{\sigma^{-1}(k)}),\]
    which implies that
    \[\mathfrak{M},w_0 \Vdash \langle \sigma[\rho] \rangle (\psi_1,\dots,\psi_k),\]
    a contradiction.
    
    Consider then the case $\sigma(0) \neq 0$. Since $\mathfrak{M},w_0 \Vdash \xi_2$, we have that
    \[\mathfrak{M},w_0 \Vdash \bigg(\neg \psi_{\sigma^{-1}(0)} \lor \neg \langle \rho \rangle (\neg \psi_{\sigma^{-1}(1)},\dots,\underbrace{\langle \sigma[\rho] \rangle(\psi_1,\dots,\psi_k)}_{\sigma(0)\text{:th formula}},\dots,\neg \psi_{\sigma^{-1}(k)}) \bigg)\]
    By assumption $\mathfrak{M},w_{\sigma(\sigma^{-1}(0))} \Vdash \psi_{\sigma^{-1}(0)}$. Furthermore, since $(w_0,w_1,\dots,w_k) \in \rho^\mathfrak{M}$ and $\mathfrak{M},w_{\sigma(\sigma^{-1}(\ell))} \Vdash \psi_{\sigma^{-1}(\ell)}$, for every $\ell \neq \sigma(0)$, we must have that
    \[\mathfrak{M},w_{\sigma(0)} \Vdash \langle \sigma[\rho] \rangle (\psi_1,\dots,\psi_k),\]
    which is a clear contradiction.
\end{proof}

Now we will extend the model $\mathfrak{M}$ as follows.
\begin{enumerate}
    \item For every $2 \leq k \leq m, \rho \in \mathfrak{T}_k, \sigma \in S_k$ and $(w_1,\dots,w_k) \in \rho^\mathfrak{M}$, we will add $(w_{\sigma(1)},\dots,w_{\sigma(k)})$ to $(\sigma[\rho])^\mathfrak{M}$.
    \item For every $2 \leq k \leq m$ and $(w_1,\dots,w_k) \in W^k$, for which there does not exists $\rho \in \mathfrak{T}_k$ so that $(w_1,\dots,w_k) \in \rho^\mathfrak{M}$, we let $\rho$ denote the relation promised by Lemma \ref{lemma:pmlm_main_lemma} and we will add the tuple $(w_{\sigma(1)},\dots,w_{\sigma(k)})$ to $(\sigma[\rho])^\mathfrak{M}$, for every $\sigma \in S_k$.
\end{enumerate}
We will still use $\mathfrak{M}$ to denote the resulting model.

\begin{lemma}
    $\mathfrak{M},w \Vdash t(\varphi)$.
\end{lemma}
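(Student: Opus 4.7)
The plan is to prove, by induction on the structure of $\psi \in \subf(t(\varphi))$, the stronger biconditional that for every $w_0 \in W$ one has
\[\mathfrak{M}^*, w_0 \Vdash \psi \text{ iff } \mathfrak{M}, w_0 \Vdash \psi,\]
where I write $\mathfrak{M}$ for the original model of $\Theta$ and $\mathfrak{M}^*$ for the result of applying extensions (1) and (2). The lemma itself then drops out by instantiating $\psi := t(\varphi)$ and $w_0 := w$, since $\mathfrak{M}, w \Vdash \Theta$ already guarantees $\mathfrak{M}, w \Vdash t(\varphi)$.

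The propositional, conjunction and negation cases of the induction are immediate: the biconditional is preserved by Boolean connectives and does not touch accessibility relations. The only nontrivial case is $\psi = \langle \rho \rangle(\psi_1,\dots,\psi_k)$ with $\rho \in \tau_\mathfrak{T}$. The forward direction is easy: since both extension steps only add tuples to relation interpretations, any witness for $\psi$ in $\mathfrak{M}$ remains a witness in $\mathfrak{M}^*$, and the induction hypothesis on $\psi_1,\dots,\psi_k$ closes the argument.

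For the backward direction, suppose $(w_0,w_1,\dots,w_k) \in \rho^{\mathfrak{M}^*}$ is a witness, so $\mathfrak{M}^*, w_\ell \Vdash \psi_\ell$ for each $1 \leq \ell \leq k$, which by the induction hypothesis gives $\mathfrak{M}, w_\ell \Vdash \psi_\ell$. If this tuple already lay in $\rho^{\mathfrak{M}}$ there is nothing to do. Otherwise it was inserted by step (1) or by step (2). In the first case there are $\rho_0 \in \mathfrak{T}_{k+1}$, $\sigma \in S_{k+1}$ and $(v_0,\dots,v_k) \in \rho_0^{\mathfrak{M}}$ with $\rho = \sigma[\rho_0]$ and $v_{\sigma(j)} = w_j$ for all $j$; since $\mathfrak{M}, v_{\sigma(\ell)} \Vdash \psi_\ell$ for each $\ell$, Lemma \ref{lemma:pmlm_closed_under_permutations} applied to $\rho_0$, $\sigma$ and $(v_0,\dots,v_k)$ yields $\mathfrak{M}, v_{\sigma(0)} \Vdash \langle \sigma[\rho_0] \rangle (\psi_1,\dots,\psi_k)$, i.e., $\mathfrak{M}, w_0 \Vdash \langle \rho \rangle(\psi_1,\dots,\psi_k)$. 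In the second case $\rho = \sigma[\rho_0]$ for the relation $\rho_0$ produced by Lemma \ref{lemma:pmlm_main_lemma} applied to the uncovered base tuple, and the defining property of $\rho_0$ from that lemma produces the required diamond directly.

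The step I expect to be the main obstacle is the permutation bookkeeping in the backward direction: one must match the inserted tuple $(w_0,\dots,w_k)$ against the correct $\sigma$-image of a base tuple $(v_0,\dots,v_k)$ and simultaneously observe that the relation symbol $\rho$ under the diamond is the corresponding $\sigma$-twist $\sigma[\rho_0]$, so that the hypotheses of Lemmas \ref{lemma:pmlm_closed_under_permutations} and \ref{lemma:pmlm_main_lemma} line up. Once this correspondence is set up, everything is driven by those two lemmas, and the induction on $\psi$ goes through routinely.
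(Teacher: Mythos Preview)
Your proposal is correct and is exactly the routine induction the paper alludes to: you establish the biconditional on subformulas, reduce the nontrivial diamond case to tracing how a witnessing tuple entered the extended model, and then discharge the two possible sources of that tuple by invoking Lemma~\ref{lemma:pmlm_closed_under_permutations} (for tuples added as permutations of existing ones) and Lemma~\ref{lemma:pmlm_main_lemma} (for tuples added to cover previously uncovered tuples). The permutation bookkeeping you flag as the main obstacle is handled correctly, and no step is missing.
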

\begin{proof}
    A routine induction.
\end{proof}

Now we are ready to use $\mathfrak{M}$ to construct a model for $\varphi$. We define a Kripke model $\mathfrak{N} = (W^*,(R^\mathfrak{N})_{R\in\tau},V^*)$ over $\tau$ as follows. First, we define that 
\[W^* := W \times \{2,\dots,m\} \times \mathbb{N},\]
where $\mathbb{N}$ is the set of natural numbers. In what follows we will adopt the convention that we will associate to every $k$-table $\rho$ an unique index from the set $\mathbb{N}$, which we simply denote by $\rho$. We start our model construction by specifying that for every $(w,\ell,r) \in W^*$ we have that $(w,\ell,r) \in V^*(p)$ iff $w\in V(p)$. Next we will assign tables to tuples. We first define that for every $(w_0,\ell,r) \in W^*$ and $(w_0,w_1,\dots,w_k) \in \rho^\mathfrak{M}$ the tuple
\[((w_0,\ell,r),(w_1,2,r + \rho), \dots, (w_k,k + 1,r + \rho))\]
realizes the type $\rho$. Observe that each such tuple consists of $k+1$ distinct elements. Indeed, the first element is distinct from the remaining elements because $r \neq r + \rho$, while the remaining elements in the tuple are pairwise distinct because they differ with respect to their second coordinate.

Notice that if we force a tuple to realize a table $\rho$, then for every $\sigma \in S_{k+1}$ the permutation of this tuple under $\sigma$ realizes the type $\sigma[\rho]$. Hence, it is not obvious that the above procedure does not assign different tables to some tuples.

\begin{claim}
    In the above procedure, no tuple is assigned a table more than once.
\end{claim}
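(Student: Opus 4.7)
The plan is to show that from the tuple $T' = (u_0, \dots, u_k)$ alone we can uniquely read off the permutation $\sigma \in S_{k+1}$ and the basic table $\rho \in \mathfrak{T}_{k+1}$ that justified its assignment, so that the resulting table $\sigma[\rho]$ depends only on $T'$ and not on the particular witness $(B,\sigma,\rho)$ we happened to pick. Recall that $T'$ is assigned a table exactly when there exist $\sigma \in S_{k+1}$ and a base tuple $B = ((w_0,\ell,r),(w_1,2,r+\rho),\dots,(w_k,k+1,r+\rho))$ (with $(w_0,\dots,w_k) \in \rho^{\mathfrak{M}}$ and $(w_0,\ell,r) \in W^*$) such that $T' = (B_{\sigma(0)},\dots,B_{\sigma(k)})$, in which case $T'$ is forced to realize $\sigma[\rho]$. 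The goal is thus to argue that any two such witnesses yield the same $\sigma[\rho]$.

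First I would fix the bijection between tables and $\mathbb{N}$ to take values in the positive integers, which is a legitimate choice because the indexing is ours to pick. Under this convention the third coordinate $r$ of $B_0$ and the common third coordinate $r+\rho$ of $B_1,\dots,B_k$ differ. Consequently, after applying $\sigma$ the tuple $T'$ has exactly one position, namely $p^* := \sigma^{-1}(0)$, whose third coordinate differs from the common third coordinate shared by the remaining $k$ positions. This pins down $p^*$ purely in terms of $T'$. Taking $r$ to be the third coordinate of $u_{p^*}$ and reading off $r+\rho$ from any other position then recovers the natural number $\rho$ by subtraction, and the injectivity of the indexing converts this integer into the table $\rho$ itself.

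Next, for every $i \ne p^*$ we have $\sigma(i) \geq 1$, so $u_i = B_{\sigma(i)}$ carries second coordinate $\sigma(i)+1$; thus $\sigma(i)$ equals the second coordinate of $u_i$ minus one. Together with $\sigma(p^*) = 0$, this determines $\sigma$ completely. Hence both $\sigma$ and $\rho$, and therefore $\sigma[\rho]$ itself, are functions of $T'$ alone, which is exactly the content of the claim.

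The only delicate point I expect is the very first step, the identification of $p^*$: it relies on $r \ne r+\rho$, which would fail if $0$ were admitted as a table index. This is the sole place where the apparently artificial third coordinate of $W^*$ is needed, and it explains why the indexing must be chosen to avoid zero. Everything beyond this observation is straightforward coordinate book-keeping.
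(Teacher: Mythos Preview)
Your argument is correct and follows the same coordinate-reconstruction idea as the paper: from the permuted tuple one recovers the base tuple $B$ (hence $\rho$) and the permutation $\sigma$ using the extra coordinates built into $W^*$. The paper argues via the second coordinates---matching up positions $1,\dots,k$ of the two candidate base tuples and concluding they coincide---whereas you first isolate position $0$ via the third coordinate and then read off $\sigma$ from the second coordinates; these are minor reorderings of the same trick, and your insistence on strictly positive table indices makes explicit an assumption the paper uses implicitly (when it asserts $r\neq r+\rho$).

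One small edge case deserves a sentence: when $k=1$ the tuple has only two positions, both of whose third coordinates ``differ from the other one'', so your rule for locating $p^*$ does not single out a position. The fix is immediate with your positivity convention---take $p^*$ to be the position with the \emph{smaller} third coordinate, since $r<r+\rho$---but as written the sentence ``exactly one position \dots\ whose third coordinate differs from the common third coordinate shared by the remaining $k$ positions'' tacitly assumes $k\geq 2$.
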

\begin{proof}
    Suppose that we have assigned tables $\rho_1$ and $\rho_2$ to a tuple $(w_0,w_1,\dots,w_k)$. We want to show that $\rho_1 = \rho_2$. By construction, our assumption implies that there are tuples
    \[((w_0',\ell',r'),(w_1',2,r' + \rho'),\dots,(w_k',k+1,r' + \rho'))\]
    and
    \[((w_0'',\ell'',r''),(w_1'',2,r'' + \rho''),\dots,(w_k'',k+1,r'' + \rho''))\]
    and permutations $\sigma_1$ and $\sigma_2$ so that $\sigma_1$ (respectively $\sigma_2$) applied to the first (respectively the second) tuple gives $(w_0,w_1,\dots,w_k)$, and furthermore that $\sigma_1[\rho'] = \rho_1$ and $\sigma_2[\rho''] = \rho_2$. Since for every $2 \leq \ell \leq k + 1$ there exists an unique element in the tuple $(w_0,w_1,\dots,w_k)$ which has $\ell$ as its second coordinate, we must have that the two above tuples are in fact the same tuples, since they are both permutations of $(w_0,w_1,\dots,w_k)$. In particular, $\rho' = \rho''$, since $r' = r''$. Finally, because this single tuple consists of $k+1$ distinct elements and permutating it with either $\sigma_1$ or $\sigma_2$ gives the same result -- namely $(w_0,w_1,\dots,w_k)$ -- we must have that $\sigma_1 = \sigma_2$, and hence $\rho_1 = \rho_2$.
\end{proof}

Finally, for every tuple $((w_0,\ell_0,r_0),\dots,(w_k,\ell_k,r_k)$ for which we have not yet assigned a table, we will pick some $\rho \in \mathfrak{T}_{k+1}$ for which $(w_0,\dots,w_k) \in \rho^\mathfrak{M}$ and assign the corresponding table to our tuple. This completes the definition of $\mathfrak{N}$.

\begin{lemma}
    For every $\psi \in \subf(\varphi)$ and $w_0\in W$ we have that
    \[\mathfrak{N},(w_0,\ell,r) \Vdash \psi \Leftrightarrow \mathfrak{M},w_0 \Vdash t(\psi).\]
\end{lemma}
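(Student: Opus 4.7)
The plan is to prove the equivalence by induction on $\psi \in \subf(\varphi)$. The base case $\psi = p$ is immediate from the definition $V^*(p) = \{(w,\ell,r) \mid w \in V(p)\}$, and the Boolean cases are routine since the translation $t$ commutes with negation and conjunction. The only substantive case is the modal one, which by property (ii) of Lemma \ref{lemma:accessability_to_table} has the form $\psi = \langle \rho \rangle(\psi_1,\ldots,\psi_k)$ for a $(k+1)$-table $\rho$ over $\tau$; its translation is $t(\psi) = \langle \rho \rangle(t(\psi_1),\ldots,t(\psi_k))$, where on the right $\rho$ is now regarded as the corresponding relation symbol in $\tau_\mathfrak{T}$.

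For the direction $(\Leftarrow)$, pick a witness $(w_0,w_1,\ldots,w_k) \in \rho^\mathfrak{M}$ with $\mathfrak{M},w_i \Vdash t(\psi_i)$ for every $1 \leq i \leq k$. The first assignment rule in the construction of $\mathfrak{N}$ then forces the tuple
\[\bigl((w_0,\ell,r),(w_1,2,r+\rho),\ldots,(w_k,k+1,r+\rho)\bigr)\]
to realize the table $\rho$ in $\mathfrak{N}$, i.e., to lie in $\llbracket \rho \rrbracket_\mathfrak{N}$. The induction hypothesis transfers satisfaction from $\mathfrak{M},w_i$ to $\mathfrak{N},(w_i,i+1,r+\rho)$, so $\mathfrak{N},(w_0,\ell,r) \Vdash \psi$ as required.

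The direction $(\Rightarrow)$ reduces to the following sub-claim: whenever a tuple $((w_0,\ell_0,r_0),\ldots,(w_k,\ell_k,r_k))$ lies in $\llbracket \rho \rrbracket_\mathfrak{N}$, one has $(w_0,w_1,\ldots,w_k) \in \rho^\mathfrak{M}$. I would establish it by tracing how such a tuple can arise in the construction of $\mathfrak{N}$: either (a) it is directly forced to realize $\rho$ by the first assignment rule, in which case $(w_0,\ldots,w_k) \in \rho^\mathfrak{M}$ holds by definition; (b) it is directly assigned in the fallback stage, which explicitly selects $\rho$ with $(w_0,\ldots,w_k) \in \rho^\mathfrak{M}$; or (c) it is a $\sigma$-permutation of a tuple forced to realize some $\rho'$ with $\sigma[\rho'] = \rho$, in which case extension step 1 applied to $\mathfrak{M}$, which closes each $\rho'^\mathfrak{M}$ under the action sending a tuple to its $\sigma$-image inside $(\sigma[\rho'])^\mathfrak{M}$, guarantees $(w_0,\ldots,w_k) \in (\sigma[\rho'])^\mathfrak{M} = \rho^\mathfrak{M}$. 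Granted the sub-claim, suppose $\mathfrak{N},(w_0,\ell,r) \Vdash \psi$; extract a $\rho$-realizing successor tuple whose entries satisfy the $\psi_i$, apply the induction hypothesis and the sub-claim, and conclude $\mathfrak{M},w_0 \Vdash t(\psi)$.

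The main obstacle is the sub-claim, and in particular case (c): ensuring that tuples arising only as permutations of forced realizers still project into $\rho^\mathfrak{M}$. This is precisely what extension step 1 in the construction of $\mathfrak{M}$ was designed to supply, and it is the reason that the enlargement of $\mathfrak{M}$ under the permutation action $\sigma \mapsto \sigma[\rho]$ must be carried out before $\mathfrak{N}$ is built.
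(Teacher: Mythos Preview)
Your proposal is correct and is precisely the routine induction the paper has in mind; the paper's own proof consists of the single phrase ``A routine induction,'' and you have simply supplied the details, including the only nontrivial point (your sub-claim for the $(\Rightarrow)$ direction and its reliance on extension step~1 for the permutation case).
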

\begin{proof}
    A routine induction.
\end{proof}

In particular $\mathfrak{N}$ is a model of $\varphi$ and hence $\varphi$ is satisfiable. Thus we can conclude that $\varphi$ is satisfiable iff $\Theta$ is.

\section{Model checking problem of $\PML(p,s,\neg,\cap)$}\label{sec:model_checking}

In this section we prove that the combined complexity of $\PML(p,s,\neg,\cap)$ is \textsc{PTime}-complete. Note that the corresponding lower bound follows already from the fact that the combined complexity of standard modal logic is \textsc{PTime}-complete \cite{GradelSurvey}.

We start by defining precisely how we will encode Kripke models. (We will assume that their domains are equipped with some arbitrary linear order). In fact, we will describe how the encode arbitrary relational models, since it avoids some notational clutter. Given two strings $x$ and $y$, we will use $x \# y$ to denote their concatenation. The \emph{list encoding} of $\mathfrak{A}$ is the sequence
\[1^{|A|} \rhd \lenc(R_1) \rhd \dots \rhd \lenc(R_m)\]
where $\rhd$ is a separator character (the use of which could be of course avoided) and each $\lenc(R_i)$ is a sequence
\[r_1\# r_2\# \dots \# r_{|R_i|},\]
where each $r_j$ is a sequence consisting of $ar(R_i)$-many binary strings of length $\log_2(|A|)$. We note that the length of the list encoding of a model $\mathfrak{A}$, denoted by $||\mathfrak{A}||$, is
\[O\bigg(|A|+\sum_{1\leq i\leq m} |R_i|ar(R_i)\log_2(|A|)\bigg).\]

\begin{remark}
    The encoding of models that we have presented here is not the only encoding of relational structures one encounters in the literature. Another standard choice of encoding can be found in \cite{Libkin04}, where the encoding essentially describes the ``adjacency" matrix of each relation, i.e., for every relation $R$ and every tuple of length $ar(R)$ there is a single bit which indicates whether that tuple belongs to the interpretation of $R$. If this encoding of models is used, then the \textsc{PTime} upper bound on the model checking problem of $\PML(p,s,\neg,\cap)$ becomes somewhat trivial, since one can compute complements of relations in linear time.
\end{remark}

Next we will present our model checking algorithm. As an important preliminary step, the following lemma will allow us to restrict our attention to formulas that contain only terms which use negation in a very restricted way.

\begin{lemma}\label{lemma:eliminate_negation}
    Suppose that $\mathcal{R} \in \gra(p,s,\neg,\cap)[\tau]$. We can compute in polynomial time a term $\mathcal{R}' \in \gra(p,s,\neg,\backslash,\cap,\cup)$ so that $\mathcal{R}$ is equivalent with $\mathcal{R}'$ and $\mathcal{R}'$ is either of the form $\mathcal{R}''$ or of the form $\neg \mathcal{R}''$, for some $\mathcal{R}'' \in \gra(p,s,\backslash,\cap,\cup)[\tau]$.
\end{lemma}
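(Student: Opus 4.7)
The plan is to proceed by structural induction on $\mathcal{R}$, producing at each step an equivalent term $\mathcal{R}'$ whose only negation, if any, sits at the very top of the syntax tree. The base case $\mathcal{R} = R$ is immediate: set $\mathcal{R}' = R$ and $\mathcal{R}'' = R$.

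For the inductive cases I would exploit Lemma \ref{lemma:simple_equations} to push negations outward. If $\mathcal{R} = p \mathcal{R}_1$ or $\mathcal{R} = s\mathcal{R}_1$, apply the inductive hypothesis to $\mathcal{R}_1$ to obtain either $\mathcal{S}$ or $\neg \mathcal{S}$; in the first case output $p\mathcal{S}$ (respectively $s\mathcal{S}$), in the second pull the negation past the permutation via item $2$ and output $\neg p \mathcal{S}$ (respectively $\neg s \mathcal{S}$). If $\mathcal{R} = \neg \mathcal{R}_1$, recurse and then toggle the outer negation, invoking item $1$ to collapse $\neg\neg \mathcal{S}$ to $\mathcal{S}$ whenever the recursive output is already negated.

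The only interesting case is $\mathcal{R} = \mathcal{R}_1 \cap \mathcal{R}_2$. After recursively normalising both conjuncts into the target form, I would do a four-way case analysis on whether each recursive output carries a top negation. If neither does, simply return their intersection. If only the second output is $\neg \mathcal{S}_2$, item $5$ rewrites $\mathcal{R}_1' \cap \neg \mathcal{S}_2$ as $\mathcal{R}_1' \backslash \mathcal{S}_2$; the symmetric case uses the commutativity of $\cap$. If both outputs are of the form $\neg \mathcal{S}_1$ and $\neg \mathcal{S}_2$, item $4$ read from right to left yields $\neg(\mathcal{S}_1 \cup \mathcal{S}_2)$, which again respects the required shape. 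In every subcase the resulting term is either negation-free or has a single $\neg$ at its root, as required.

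What remains is the polynomial-time bound, which is the only place where one needs to pay attention. The crucial observation is that each inductive step introduces at most a constant number of new symbols on top of the terms produced by the recursive calls, and in particular never duplicates a subterm. Hence $|\mathcal{R}'| = O(|\mathcal{R}|)$ and the whole construction can be carried out in polynomial time by a direct recursive implementation. I do not expect any genuine technical obstacle beyond the moderately verbose case analysis for $\cap$, which is bookkeeping rather than substance.
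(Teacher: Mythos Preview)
Your proof is correct and is precisely a fleshed-out version of the paper's one-line argument, which simply says to use Lemma~\ref{lemma:simple_equations} to push all negations to the root and then cancel consecutive ones. Your explicit structural induction and the four-way case split for $\cap$ are exactly the bookkeeping the paper leaves implicit, and your linear-size observation for the polynomial bound is the right justification.
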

\begin{proof}
    Using Lemma \ref{lemma:simple_equations}, we can bring all the negations occurring in the input term $\mathcal{R}$ to the start of the term, which -- after eliminating consecutive negations -- results in a term which is either of the form $\mathcal{R}'$ or $\neg \mathcal{R}'$, where $\mathcal{R}' \in \mathrm{GRA}(p,s,\backslash,\cap,\cup)[\tau]$.
\end{proof}

Suppose now that $\varphi \in \PML(p,s,\neg,\cap)[\tau]$ and $\mathfrak{M} = (W,(R^\mathfrak{M})_{R\in \tau},V)$. Our goal is to compute the set of worlds in $\mathfrak{M}$ where $\varphi$ is true. By applying Lemma \ref{lemma:eliminate_negation}, we can assume that in each subformula $\langle \mathcal{R} \rangle (\psi_1,\dots,\psi_k)$ the term $\mathcal{R}$ is either of the form $\mathcal{R}'$ or $\neg \mathcal{R}'$, where $\mathcal{R}' \in \gra(p,s,\backslash,\cap,\cup)[\tau]$. Using induction, one can show that the size of $\llbracket \mathcal{R}' \rrbracket_{\mathfrak{M}}$ is only polynomial with respect to $||\mathfrak{M}||$.

Now we will describe the model checking algorithm, which extends the standard labeling algorithm that is often used in the context of modal logics \cite{Vardi1996WhyIM}. The algorithm will use some enumeration of the subformulas $\varphi_1,\dots,\varphi_n$ of $\varphi$ which satisfies the requirement that if $\varphi_j$ is a proper subformula of $\varphi_i$, then $j < i$.

\begin{enumerate}
    \item [] $\nu = \varnothing$
    \item [] \textbf{for} $i=1$ \textbf{through} $n$ \textbf{do}:
    \begin{itemize}
        \item [] \textbf{if} $\varphi_i = p$ \textbf{then} $\nu := \nu \cup \{(p,V(p))\}$
        \item [] \textbf{if} $\varphi_i = \neg \varphi_j$ \textbf{then} $\nu := \nu \cup \{(\varphi_i,W\backslash \nu(\varphi_j)\}$
        \item [] \textbf{if} $\varphi_i = \varphi_j \land \varphi_k$ \textbf{then} $\nu := \nu \cup \{(\varphi_i,\nu(\varphi_j) \cap \nu(\varphi_k)\}$
        \item [] \textbf{if} $\varphi_i = \langle \mathcal{R} \rangle (\varphi_{i_1},\dots,\varphi_{i_k})$ \textbf{then}
        \[\nu := \nu \cup \bigg(\varphi_i, \bigg\{w\in W \bigg\vert \bigg(w \times \nu(\varphi_{i_1}) \times \dots \times \nu(\varphi_{i_k}) \bigg) \cap \llbracket \mathcal{R} \rrbracket_\mathfrak{M} \neq \varnothing\bigg\}\bigg)\]
        \item [] \textbf{if} $\varphi_i = \langle \neg\mathcal{R} \rangle (\varphi_{i_1},\dots,\varphi_{i_k})$ \textbf{then}
        \begin{enumerate}
            \item [] $U := \varnothing$
            \item [] \textbf{for} $w\in W$ \textbf{do}:
            \begin{itemize}
                \item [] \textbf{for} $(w_1,\dots,w_k) \in \nu(\varphi_{i_1}) \times \dots \times \nu(\varphi_{i_k})$ \textbf{do}:
                \begin{enumerate}
                    \item [] \textbf{if} $(w,w_1,\dots,w_k) \not\in \llbracket \mathcal{R} \rrbracket_\mathfrak{M}$ \textbf{then} $U := U \cup \{w\}$ and \textbf{break}
                \end{enumerate}
            \end{itemize}
            \item [] $\nu := \nu \cup \{(\varphi_i,U)\}$
        \end{enumerate}
    \end{itemize}
    \item [] \textbf{return} $\nu(\varphi)$
\end{enumerate}

\noindent It is straightforward to check that the above algorithm is correct. We note that in certain places the description of the algorithm is, for ease of exposition, somewhat informal. In particular, we have not specified how in the case of $\langle \neg \mathcal{R} \rangle (\varphi_{i_1},\dots,\varphi_{i_k})$ the for-loop going through the set $\nu(\varphi_{i_1}) \times \dots \nu(\varphi_{i_k})$ should be implemented; this is in fact a somewhat important detail, since we can not always construct this set explicitly, since in the worst case its size is proportional to $|W|^{|\varphi|}$. However, it is clear that this explicit construction can be avoided, because we can alternatively maintain $k \leq |\varphi|$ pointers to elements of the sets $\nu(\varphi_{i_\ell})$.

We are now left with the easy task of proving that our algorithm runs in polynomial time.

\begin{lemma}\label{lemma:algorithm_runs_in_polynomial_time}
    The above algorithm runs in time polynomial with respect to 
    $$|\varphi|\times || \mathfrak{M} ||.$$
\end{lemma}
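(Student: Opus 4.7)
The plan is to bound the time spent in each iteration of the outer \textbf{for}-loop and then multiply by the number of subformulas, which is at most $|\varphi|$. The nontrivial facts to establish concern the two modal cases.

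First, I would prove by induction on the structure of $\mathcal{R}' \in \gra(p,s,\backslash,\cap,\cup)[\tau]$ that both $|\llbracket \mathcal{R}' \rrbracket_\mathfrak{M}|$ and the time required to compute $\llbracket \mathcal{R}' \rrbracket_\mathfrak{M}$ are bounded by a polynomial in $||\mathfrak{M}||\cdot |\varphi|$. The base case $\mathcal{R}' = R$ holds because $R^\mathfrak{M}$ is directly available from the list encoding. The permutation operators $p$ and $s$ preserve cardinality and can be applied tuple-by-tuple, while $\cap$, $\cup$, $\backslash$ each produce an output of size at most the sum of the sizes of their inputs and are computable in polynomial time if interpretations are stored as sorted lists (or in a hash table). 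The restriction to $\gra(p,s,\backslash,\cap,\cup)$ supplied by Lemma \ref{lemma:eliminate_negation} is essential here: allowing $\neg$ inside $\mathcal{R}'$ could make intermediate relations blow up to size $|W|^k$.

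Given this, the cases $\varphi_i = p$, $\varphi_i = \neg \varphi_j$, and $\varphi_i = \varphi_j \land \varphi_k$ are immediate $O(|W|)$ set manipulations on subsets of $W$. For $\varphi_i = \langle \mathcal{R} \rangle(\varphi_{i_1},\dots,\varphi_{i_k})$ with $\mathcal{R}\in \gra(p,s,\backslash,\cap,\cup)[\tau]$, I would first compute $T := \llbracket \mathcal{R} \rrbracket_\mathfrak{M}$ and then, for each tuple $(w,w_1,\dots,w_k)\in T$, test whether $w_\ell \in \nu(\varphi_{i_\ell})$ for every $\ell$, adding $w$ to $\nu(\varphi_i)$ if so. This takes time polynomial in $|T|$ and $|\varphi|$.

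The case $\varphi_i = \langle \neg \mathcal{R} \rangle(\varphi_{i_1},\dots,\varphi_{i_k})$ is where I expect the main subtlety to lie, because the product $\nu(\varphi_{i_1})\times\dots\times \nu(\varphi_{i_k})$ being iterated over can in principle contain as many as $|W|^k$ tuples, which is superpolynomial when $k$ grows with $|\varphi|$. The key observation that rescues the running time is the \textbf{break} statement. Fix $w\in W$ and write $T := \llbracket \mathcal{R} \rrbracket_\mathfrak{M}$. If no witness is ever found, then every enumerated tuple $(w_1,\dots,w_k)$ satisfies $(w,w_1,\dots,w_k)\in T$; since these tuples are pairwise distinct, there can be at most $|T|$ of them, so the inner loop terminates after at most $|T|$ iterations. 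If a witness does exist, then by the same pigeonhole reasoning it must be encountered among the first $|T|+1$ enumerated tuples. So the inner loop always performs at most $|T|+1$ iterations, each executed in polynomial time by maintaining pointers into the lists $\nu(\varphi_{i_1}),\dots,\nu(\varphi_{i_k})$ to generate the next tuple, and by testing membership in $T$ via binary search. The overall cost of this case is therefore polynomial in $|W|\cdot |T|\cdot |\varphi|$. Summing the per-subformula bound over the $O(|\varphi|)$ iterations of the outer \textbf{for}-loop yields the desired polynomial bound.
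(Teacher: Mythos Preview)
Your proof is correct and follows essentially the same line as the paper's: bound the outer loop by $|\varphi|$ iterations, handle the easy cases directly, and for the $\langle \neg \mathcal{R}\rangle$ case use the \textbf{break} to argue that the inner enumeration terminates after at most $|\llbracket \mathcal{R}\rrbracket_\mathfrak{M}|+1$ steps since distinct enumerated tuples that fail to witness must all lie in $\llbracket \mathcal{R}\rrbracket_\mathfrak{M}$. Your write-up is in fact more explicit than the paper's (which states the key pigeonhole observation in a single sentence and even writes $|\llbracket \mathcal{R}\rrbracket_\mathfrak{M}|\leq ||\mathfrak{M}||$ where only a polynomial bound is justified), but the argument is the same.
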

\begin{proof}
    The outer most for-loop is executed $|\subf(\varphi)|\leq |\varphi|$ times, so it suffices to argue that each case within the loop can be done in time polynomial with respect to $|\varphi| \times ||\mathfrak{M}||$. The most non-trivial case is the case of $\langle \neg \mathcal{R} \rangle (\varphi_{i_1},\dots,\varphi_{i_k})$, where we can make the simple observation that the running time of the for-loop going through $\nu(\varphi_{i_1}) \times \dots \nu(\varphi_{i_k})$ is bounded above by the size of $\llbracket \mathcal{R} \rrbracket_{\mathfrak{M}} \leq || \mathfrak{M} ||$, since it stops after we have encountered a tuple which does not belong to $\llbracket \mathcal{R} \rrbracket_\mathfrak{M}$ (or after we have went through all the relevant $k$-tuples).
\end{proof}

Since the model checking problem of standard modal logic is \textsc{PTime}-complete, we have the desired result.

\begin{theorem}\label{thm:mc_for_pbml}
    The model checking problem of $\PML(p,s,\neg,\cap)$ is \textsc{PTime}-complete.
\end{theorem}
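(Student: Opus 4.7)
The plan is to combine the lower bound from standard modal logic with the polynomial-time correctness of the labeling algorithm that has already been set up in this section. The lower bound is immediate, since $\mathrm{ML}$ embeds syntactically into $\PML(p,s,\neg,\cap)$ (restrict to binary accessibility relations and the identity permutation, and use no Boolean operations on roles), and the combined model checking problem of $\mathrm{ML}$ is already $\textsc{PTime}$-hard~\cite{GradelSurvey}; this hardness transfers verbatim.

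For the upper bound, my plan is to invoke the preparatory material in order. First, apply Lemma~\ref{lemma:eliminate_negation} to every relational term appearing in $\varphi$, so that each diamond subformula has the shape $\langle \mathcal{R} \rangle(\psi_1,\dots,\psi_k)$ or $\langle \neg \mathcal{R} \rangle(\psi_1,\dots,\psi_k)$ for some negation-free $\mathcal{R} \in \mathrm{GRA}(p,s,\backslash,\cap,\cup)[\tau]$. Second, observe by a routine induction on such negation-free terms that $|\llbracket \mathcal{R} \rrbracket_\mathfrak{M}|$ is polynomial in $||\mathfrak{M}||$: permutations preserve cardinality, and each of $\cap, \cup, \backslash$ produces a set no larger than the sum of the sizes of the operands. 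Moreover, $\llbracket \mathcal{R} \rrbracket_\mathfrak{M}$ itself can be computed in polynomial time by evaluating the term bottom-up. With this in hand, the labeling algorithm given above traverses the subformulas of $\varphi$ in an order refining the subformula relation and attaches to each $\varphi_i$ the set $\nu(\varphi_i)$ of worlds satisfying it. Correctness is a routine induction that simply mirrors the semantic clauses of $\PML(p,s,\neg,\cap)$.

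The genuine obstacle, and essentially the only nontrivial step, is bounding the cost of the case $\varphi_i = \langle \neg \mathcal{R} \rangle(\varphi_{i_1},\dots,\varphi_{i_k})$. A naive enumeration of $\nu(\varphi_{i_1}) \times \dots \times \nu(\varphi_{i_k})$ is infeasible, since this product can have size up to $|W|^{|\varphi|}$. The key idea is that the algorithm only searches for a single witness tuple lying outside $\llbracket \mathcal{R} \rrbracket_\mathfrak{M}$ and breaks as soon as one is encountered. Since $|\llbracket \mathcal{R} \rrbracket_\mathfrak{M}|$ is polynomial in $||\mathfrak{M}||$, a simple pigeonhole argument shows that either a witness appears within the first $|\llbracket \mathcal{R} \rrbracket_\mathfrak{M}| + 1$ iterations, or the entire product is already contained in $\llbracket \mathcal{R} \rrbracket_\mathfrak{M}$ and is therefore itself of polynomial size; in both cases the inner loop runs in polynomial time. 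Together with the runtime analysis of Lemma~\ref{lemma:algorithm_runs_in_polynomial_time}, this yields an overall bound polynomial in $|\varphi|\times ||\mathfrak{M}||$, and matching the lower bound gives the theorem.
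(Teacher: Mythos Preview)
Your proposal is correct and follows essentially the same approach as the paper: the lower bound is inherited from standard modal logic, and the upper bound combines Lemma~\ref{lemma:eliminate_negation}, the observation that negation-free terms have polynomially bounded interpretations, and the labeling algorithm whose only delicate case, $\langle \neg \mathcal{R} \rangle(\dots)$, is handled by the early-break argument bounding the inner loop by $|\llbracket \mathcal{R} \rrbracket_\mathfrak{M}|$ (exactly the content of Lemma~\ref{lemma:algorithm_runs_in_polynomial_time}). Your write-up is somewhat more explicit than the paper's, but the strategy and the key counting observation are identical.
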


\section{Conclusions}\label{sec:conclusions}

We have studied the computational complexity of model checking and satisfiability problems of polyadic modal logics extended with permutations and Boolean operators on accessibility relations. Concerning satisfiability problems, we have proved that the satisfiability problems of both polyadic modal logic extended with negations of accessibility relations $\PML(\neg)$ and full polyadic Boolean modal logic extended with permutations over $\PML(p,s,\neg,\cap)$ are \textsc{ExpTime}-complete, the latter under the assumption that the underlying set of accessibility relations is finite, which is necessary if \textsc{ExpTime} $\neq$ \textsc{NExpTime}. We have also established that the model checking problem for full polyadic Boolean modal logic extended with permutations $\PML(p,s,\neg,\cap)$ is \textsc{PTime}-complete. Our results contribute to the research program outlined in \cite{IsoTuisku2021DescriptionLA} and extend the results of \cite{ComplexityReasoningBooleanModalLogics, Lutz2001ModalLA} to polyadic context.

Concerning future research directions, the reductions that we used in establishing complexity bounds on satisfiability problems seem to be quite robust, and hence we expect that in the future they can be used to extend the results presented here. For instance, one can most likely show that $\PML(p,s,\neg)$ has an \textsc{ExpTime}-complete satisfiability problem, which we have not yet been able to do. In this direction a natural intermediate problem would be to establish that $\PML(p,s) + \E$ has an \textsc{ExpTime}-complete satisfiability problem, since then one might be able to adapt the techniques used in this paper to reduce the satisfiability problem of $\PML(p,s,\neg)$ to that of $\PML(p,s) + \E$.

Another obvious direction would be to show that if the underlying set of accessibility relations is finite, then the satisfiability problem of $\PML(p,s,\neg,\cap)$ extended with equality $\PML(I,p,s,\neg,\cap)$ is \textsc{ExpTime}-complete. Indeed, such a result would fully generalize the main result of \cite{Lutz2001ModalLA} to polyadic context, which states the satisfiability problem of $\mathrm{ML}(I,s,\neg,\cap)$ is \textsc{ExpTime}-complete, when the underlying set of accessibility relations is finite.

\bibliography{csl}

\begin{thebibliography}{10}

\bibitem{DescriptionLogicBook}
Franz Baader, Ian Horrocks, Carsten Lutz, and Ulrike Sattler.
\newblock {\em An Introduction to Description Logic}.
\newblock Cambridge University Press, 2017.

\bibitem{Bednarczyk2021ExploitingFS}
Bartosz Bednarczyk.
\newblock Exploiting forwardness: Satisfiability and query-entailment in
  forward guarded fragment.
\newblock In {\em Logics in Artificial Intelligence}, pages 179--193. Springer,
  2021.

\bibitem{ModelCheckingGF}
Dietmar Berwanger and Erich Grädel.
\newblock Games and model checking for guarded logics.
\newblock In {\em Proceedings of the Artificial Intelligence on Logic for
  Programming}, page 70–84. Springer-Verlag, 2001.

\bibitem{DBLP:books/cu/BlackburnRV01}
Patrick Blackburn, Maarten de~Rijke, and Yde Venema.
\newblock {\em Modal Logic}, volume~53 of {\em Cambridge Tracts in Theoretical
  Computer Science}.
\newblock Cambridge University Press, 2001.
\newblock \href {https://doi.org/10.1017/CBO9781107050884}
  {\path{doi:10.1017/CBO9781107050884}}.

\bibitem{CalvaneseGL98}
Diego Calvanese, Giuseppe~De Giacomo, and Maurizio Lenzerini.
\newblock On the decidability of query containment under constraints.
\newblock In Alberto~O. Mendelzon and Jan Paredaens, editors, {\em Proceedings
  of the Seventeenth {ACM} {SIGACT-SIGMOD-SIGART} Symposium on Principles of
  Database Systems}, pages 149--158, 1998.
\newblock \href {https://doi.org/10.1145/275487.275504}
  {\path{doi:10.1145/275487.275504}}.

\bibitem{windowmodality}
G~Gargov, S~Passy, and T~Tinchev.
\newblock Modal environment for boolean speculations.
\newblock In {\em Mathematical logic and its applications}, pages 253--263,
  1987.

\bibitem{Gradel99}
Erich Gr{\"{a}}del.
\newblock On the restraining power of guards.
\newblock {\em J. Symb. Log.}, 64(4):1719--1742, 1999.
\newblock \href {https://doi.org/10.2307/2586808} {\path{doi:10.2307/2586808}}.

\bibitem{WhyAreModalLogicsDecidable}
Erich Grädel.
\newblock {\em Why Are Modal Logics so Robustly Decidable?}, page 393–408.
\newblock World Scientific Publishing Co., Inc., 2001.

\bibitem{GradelSurvey}
Erich Grädel.
\newblock Finite model theory and descriptive complexity.
\newblock In {\em Finite Model Theory and Its Applications}, pages 125--230.
  Springer, 2007.

\bibitem{GKV97}
Erich Grädel, Phokion Kolaitis, and Moshe Vardi.
\newblock On the decision problem for two-variable first-order logic.
\newblock {\em Bulletin of Symbolic Logic}, 3(1):53--69, 1997.

\bibitem{OneDimensionalFragment}
Lauri Hella and Antti Kuusisto.
\newblock One-dimensional fragment of first-order logic.
\newblock In {\em Proceedings of Advances in Modal Logic}, pages 274--293,
  2014.

\bibitem{IsoTuisku2018Ordered}
Jonne Iso-Tuisku and Antti Kuusisto.
\newblock Uniform one-dimensional fragment over ordered structures.
\newblock {\em ArXiv}, abs/1812.08732, 2018.

\bibitem{IsoTuisku2021DescriptionLA}
Jonne Iso-Tuisku and Antti Kuusisto.
\newblock Description logics as polyadic modal logics.
\newblock {\em ArXiv}, abs/2108.08838, 2021.

\bibitem{ORDEREDFRAGMENTS}
Reijo Jaakkola.
\newblock Ordered fragments of first-order logic.
\newblock In {\em 46th International Symposium on Mathematical Foundations of
  Computer Science (MFCS)}, pages 62:1--62:14, 2021.

\bibitem{Jaakkola2020AlgebraicCF}
Reijo Jaakkola and Antti Kuusisto.
\newblock Algebraic classifications for fragments of first-order logic and
  beyond.
\newblock {\em ArXiv}, abs/2005.01184, 2020.

\bibitem{Kieronski19}
Emanuel Kieronski.
\newblock One-dimensional guarded fragments.
\newblock In {\em 44th International Symposium on Mathematical Foundations of
  Computer Science, {MFCS}}, pages 16:1--16:14, 2019.
\newblock \href {https://doi.org/10.4230/LIPIcs.MFCS.2019.16}
  {\path{doi:10.4230/LIPIcs.MFCS.2019.16}}.

\bibitem{Kieronski2014ComplexityAE}
Emanuel Kieronski and Antti Kuusisto.
\newblock Complexity and expressivity of uniform one-dimensional fragment with
  equality.
\newblock In {\em 39nd International Symposium on Mathematical Foundations of
  Computer Science (MFCS)}, pages 365--376, 2014.

\bibitem{10.1093/logcom/exac002}
Emanuel Kieronski and Antti Kuusisto.
\newblock One-dimensional fragment over words and trees.
\newblock {\em Journal of Logic and Computation}, 02 2022.

\bibitem{KuusistoUfSurvey}
Antti Kuusisto.
\newblock On the uniform one-dimensional fragment.
\newblock In {\em International Workshop on Description Logics (DL)}, 2016.

\bibitem{10.1145/3209108.3209168}
Antti Kuusisto and Carsten Lutz.
\newblock Weighted model counting beyond two-variable logic.
\newblock In {\em Proceedings of the 33rd Annual ACM/IEEE Symposium on Logic in
  Computer Science (LICS)}, page 619–628, 2018.

\bibitem{Libkin04}
Leonid Libkin.
\newblock {\em Elements of Finite Model Theory}.
\newblock Texts in Theoretical Computer Science. An {EATCS} Series. Springer,
  2004.
\newblock \href {https://doi.org/10.1007/978-3-662-07003-1}
  {\path{doi:10.1007/978-3-662-07003-1}}.

\bibitem{Lutz2000TheCO}
Carsten Lutz and Ulrike Sattler.
\newblock The complexity of reasoning with boolean modal logics.
\newblock In {\em Proceedings of Advances in Modal Logic}, pages 329--348,
  2000.

\bibitem{ComplexityReasoningBooleanModalLogics}
Carsten Lutz and Ulrike Sattler.
\newblock The complexity of reasoning with boolean modal logics.
\newblock In {\em Advances in Modal Logic}, 2000.

\bibitem{LutzS00}
Carsten Lutz and Ulrike Sattler.
\newblock Mary likes all cats.
\newblock In {\em Proceedings of the 2000 International Workshop on Description
  Logics (DL2000)}, pages 213--226, 2000.

\bibitem{LutzST99}
Carsten Lutz, Ulrike Sattler, and Stephan Tobies.
\newblock A suggestion for an n-ary description logic.
\newblock In Patrick Lambrix, Alexander Borgida, Maurizio Lenzerini, Ralf
  M{\"{o}}ller, and Peter~F. Patel{-}Schneider, editors, {\em Proceedings of
  the 1999 International Workshop on Description Logics (DL'99)}, 1999.

\bibitem{Lutz2001ModalLA}
Carsten Lutz, Ulrike Sattler, and Frank Wolter.
\newblock Modal logic and the two-variable fragment.
\newblock In {\em Proceedings of the 15th International Workshop on Computer
  Science Logic (CSL)}, page 247–261, 2001.

\bibitem{Marx2007}
Maarten Marx and Yde Venema.
\newblock Local variations on a loose theme: Modal logic and decidability.
\newblock In {\em Finite Model Theory and Its Applications}, pages 371--429.
  Springer, 2007.

\bibitem{Schmolze89}
James~G. Schmolze.
\newblock Terminological knowledge representation systems supporting n-ary
  terms.
\newblock In {\em Proceedings of the 1st International Conference on Principles
  of Knowledge Representation and Reasoning (KR'89)}, 1989.

\bibitem{Vardi1996WhyIM}
Moshe~Y. Vardi.
\newblock Why is modal logic so robustly decidable?
\newblock In {\em Descriptive Complexity and Finite Models}, pages 149--183.
  American Mathematical Society, 1996.

\end{thebibliography}
\end{document}